\documentclass[11pt]{article}
\usepackage{amsmath,amsthm,amssymb}
\usepackage{fullpage}
\usepackage[onehalfspacing]{setspace}
\usepackage[parfill]{parskip}
\usepackage[dvipsnames]{xcolor}
\usepackage{newpxtext}
\usepackage{newpxmath}
\usepackage{tikz}
\usepackage{mathtools}
\usepackage{bm}
\usepackage{todonotes}
\usepackage{array}
\usepackage{makecell}
\usepackage{booktabs,tabularx,threeparttable}
\usepackage{hyperref}
\usepackage[nameinlink, capitalise]{cleveref-usedon}

\newtheorem{theorem}{Theorem}[section]
\newtheorem{lemma}[theorem]{Lemma}

\newtheorem{corollary}[theorem]{Corollary}
\newtheorem{proposition}[theorem]{Proposition}

\theoremstyle{definition}
\newtheorem{definition}[theorem]{Definition}

\crefname{theorem}{Theorem}{Theorems}
\crefname{lemma}{Lemma}{Lemmas}
\crefname{conjecture}{Conjecture}{Conjectures}
\crefname{corollary}{Corollary}{Corollaries}
\crefname{proposition}{Proposition}{Propositions}
\crefname{definition}{Definition}{Definitions}
\crefname{example}{Example}{Examples}
\crefname{remark}{Remark}{Remarks}
\crefname{question}{Question}{Questions}
\crefname{condition}{Condition}{Conditions}

\AddToHook{env/theorem/begin}{\crefalias{theorem}{theorem}}
\AddToHook{env/lemma/begin}{\crefalias{theorem}{lemma}}
\AddToHook{env/conjecture/begin}{\crefalias{theorem}{conjecture}}
\AddToHook{env/corollary/begin}{\crefalias{theorem}{corollary}}
\AddToHook{env/proposition/begin}{\crefalias{theorem}{proposition}}
\AddToHook{env/definition/begin}{\crefalias{theorem}{definition}}
\AddToHook{env/example/begin}{\crefalias{theorem}{example}}
\AddToHook{env/remark/begin}{\crefalias{theorem}{remark}}
\AddToHook{env/question/begin}{\crefalias{theorem}{question}}
\AddToHook{env/condition/begin}{\crefalias{theorem}{condition}}

\crefname{Program}{Program}{Programs}
\creflabelformat{Program}{(#2\textup{#1})#3}

\renewcommand*\d{\mathop{}\!\mathrm{d}}
\newcommand{\E}{\mathbb{E}}

\definecolor{linkc}{rgb}{0.6, 0.2, 0.3}
\definecolor{citec}{rgb}{0.3, 0.2, 0.6}
\definecolor{urlc}{rgb}{0.3, 0.45, 0.3}
\hypersetup{
    colorlinks=true,
    linkcolor=linkc,
    citecolor=citec,
    urlcolor=urlc
}

\title{Winning in the Limit:\\ Average-Case Committee Selection with Many Candidates}
\author{\begin{tabular}{cc}
\begin{minipage}[t]{0.4\textwidth}\centering
Yifan Lin\\
\small Shanghai Jiao Tong University\\
\small\href{mailto:sjtu15088284174@sjtu.edu.cn}{sjtu15088284174@sjtu.edu.cn}
\end{minipage}
&
\begin{minipage}[t]{0.4\textwidth}\centering
Shenyu Qin\\
\small Shanghai Jiao Tong University\\
\small\href{mailto:celery2022@sjtu.edu.cn}{celery2022@sjtu.edu.cn}
\end{minipage}
\\[7ex]
\begin{minipage}[t]{0.4\textwidth}\centering
Kangning Wang\\
\small Rutgers University\\
\small\href{mailto:kn.w@rutgers.edu}{kn.w@rutgers.edu}
\end{minipage}
&
\begin{minipage}[t]{0.4\textwidth}\centering
Lirong Xia\\
\small DIMACS, Rutgers University\\
\small\href{mailto:lirong.xia@rutgers.edu}{lirong.xia@rutgers.edu}
\end{minipage}
\end{tabular}}
\date{}

\begin{document}

\maketitle
\thispagestyle{empty}
\setcounter{page}{0}

\begin{abstract}
We study the committee selection problem in the canonical impartial culture model with a large number of voters and an even larger candidate set. Here, each voter independently reports a uniformly random preference order over the candidates. For a fixed committee size $k$, we ask when a committee can collectively beat every candidate outside the committee by a prescribed majority level $\alpha$. We focus on two natural notions of collective dominance, $\alpha$-winning and $\alpha$-dominating sets, and we identify sharp threshold phenomena for both of them using probabilistic methods, duality arguments, and rounding techniques.

We first consider \emph{$\alpha$-winning sets}. A set $S$ of $k$ candidates is $\alpha$-winning if, for every outside candidate $a \notin S$, at least an $\alpha$-fraction of voters rank \emph{some} member of $S$ above $a$. We show a sharp threshold at
\[
\alpha_{\mathrm{win}}^\star = 1 - \frac{1}{k}.
\]
Specifically, an $\alpha$-winning set of size $k$ exists with high probability when $\alpha < \alpha_{\mathrm{win}}^\star$, and is unlikely to exist when $\alpha > \alpha_{\mathrm{win}}^\star$.

We then study the stronger notion of \emph{$\alpha$-dominating sets}. A set $S$ of $k$ candidates is $\alpha$-dominating if, for every outside candidate $a \notin S$, there exists a \emph{single} committee member $b \in S$ such that at least an $\alpha$-fraction of voters prefer $b$ to $a$. Here we establish an analogous sharp threshold at
\[
\alpha_{\mathrm{dom}}^\star = \frac{1}{2} - \frac{1}{2k}.
\]
That is, an $\alpha$-dominating set of size $k$ exists with high probability when $\alpha < \alpha_{\mathrm{dom}}^\star$, and is unlikely to exist when $\alpha > \alpha_{\mathrm{dom}}^\star$. As a corollary, our analysis yields an impossibility result for $\alpha$-dominating sets: for every $k$ and every $\alpha > \alpha_{\mathrm{dom}}^\star = 1 / 2 - 1 / (2k)$, there exist preference profiles that admit no $\alpha$-dominating set of size $k$. This corollary improves the best previously known bounds for all $k \geq 2$.
\end{abstract}
 
\newpage

\section{Introduction}
Consider the following canonical setting in social choice theory: there are $n$ agents (voters) and $m$ alternatives (candidates); each agent's vote is a full ranking of the alternatives, and the goal is to choose a single winner from the alternatives. In this setting, the Condorcet criterion~\cite{Condorcet1785:Essai} is one of the most classical and well-accepted principles: an alternative $a^*$ is a \emph{Condorcet winner} if it beats every other alternative $b$ in their head-to-head competition, and the Condorcet criterion requires the winner to be the Condorcet winner whenever it exists. 

Unfortunately, the \emph{Condorcet paradox}~\cite{Condorcet1785:Essai} points out that a Condorcet winner does not always exist. As a simple example, let there be three alternatives $\{a,b,c\}$ and three votes (1) $a\succ b\succ c$, (2) $b\succ c\succ a$, and (3) $c\succ a\succ b$, where the symbol ``$\succ$'' means ``is preferred to.'' It follows that in head-to-head comparisons, $a$ beats $b$, $b$ beats $c$, and $c$ beats $a$, forming intransitive social preferences known as the \emph{Condorcet cycle}. Subsequent studies to understand and address such intransitivity have shaped the field of social choice theory. For example, the paradox was extended by Arrow in his celebrated impossibility theorem~\cite{Arrow63:Social} and can be circumvented by \emph{domain restriction}, requiring agents' preferences to be from a proper subset of all full rankings~\cite{Black48:Rationale,Sen1966:A-Possibility}. Various relaxations of the Condorcet criterion have been considered in the following three directions. 
\begin{itemize}
\item \textbf{Direction 1: Relax the strength of social preferences.} Given a preference profile and a parameter $\alpha \in [0,1]$, we say that an alternative $a$ \emph{$\alpha$-beats} another alternative $b$ if at least $\alpha n$ agents prefer $a$ to $b$. An \emph{$\alpha$-Condorcet winner}, therefore, is an alternative that $\alpha$-beats all other alternatives~\cite{Sertel2004:Strong}. Note that an alternative is a $0.5$-Condorcet winner if and only if it is a standard Condorcet winner. The maximin rule (also known as Simpson's rule or Kramer's rule) chooses an $\alpha$-Condorcet winner with the largest $\alpha$.
\item \textbf{Direction 2: Relax the number of winners.}  This direction aims at choosing a committee of $k$ alternatives that ``collectively'' beats each alternative not in the committee by a majority of agents~\cite{Elkind2015:Condorcet,Alon06:Dominating}. A committee $W$ of $k$ alternatives is called a Condorcet-winning committee if, for every other alternative $b$, at least half of the agents prefer some alternative in $W$ to $b$~\cite{Elkind2015:Condorcet}. A Condorcet-winning committee $W$ is called a \emph{Condorcet-dominating committee} if, for each $b$, a majority of agents prefer a common alternative in $W$ to $b$~\cite{Alon06:Dominating,Elkind2015:Condorcet}.
\item \textbf{Direction 3: Relax the worst-case analysis.} This direction can be viewed as a (probabilistic) generalization of domain restriction that studies the probability of the (non-)existence of a Condorcet winner under distributions over agents' preferences, such as i.i.d.\@ uniform distributions over all rankings, known as the \emph{Impartial Culture} (IC). This direction was initiated by Gehrlein and Fishburn~\cite{Gehrlein1976:The-probability}, and since then, the general study of the probabilistic satisfaction of axioms, including the Condorcet criterion, has become a ``new sub-domain of the theory of social choice''~\cite{Diss2021:Evaluating}. 
\end{itemize}
It is natural to combine some of the directions mentioned above. For example, at the intersection of Directions 1 and 2, the works of Charikar, Ramakrishnan, and Wang~\cite{Charikar2026:Approximately} and Bourneuf, Charbit, and Thomass\'{e}~\cite{Bourneuf2025:Neighborhood} studied the existence of $\alpha$-dominating committees of $k$ alternatives. At the intersection of Directions 1 and 3, Caizergues, Durand, Noy, de Panafieu, and Ravelomanana~\cite{Caizergues2025:Probability} applied analytical combinatorics to study the existence of an extension of $\alpha$-Condorcet winners under IC for fixed $m$ as $n$ approaches infinity. At the intersection of all three directions, Elkind, Lang, and Saffidine~\cite[Corollary 2]{Elkind2015:Condorcet} showed that for every fixed $m$, every $k$-committee $W$, and every $\alpha<\frac{k}{k+1}$, the probability of $W$ being $\alpha$-winning under IC approaches $1$ as $n$ approaches infinity. However, the following research question, which spans all three directions, remains largely open:
\begin{quote}
How likely does an $\alpha$-winning/dominating committee of $k$ alternatives exist under IC?
\end{quote}
Much of the previous work adopted IC for a fixed number of alternatives $m$ and let the number of agents $n$ approach infinity. This setting is well-motivated in classical applications of social choice, especially in political elections, where $m$ is often relatively small and $n$ is comparatively much larger. In many novel applications of social choice, such as recommender systems~\cite{Dwork01:Rank}, combinatorial voting~\cite{Lang16:Voting} and reinforcement learning from human feedback (RLHF)~\cite{Conitzer2024:Social}, the parameter $n$ is relatively small while $m$ is much larger. Motivated by these novel applications, we focus on asymptotically addressing the aforementioned question for fixed $n$ as $m\rightarrow\infty$. That is, we will characterize
\[
\lim_{m \to \infty}\Pr_{\mathrm{IC}(n,m)}\left[\text{An $\alpha$-winning (respectively, $\alpha$-dominating) committee exists} \right].
\]

\subsection{Our Results}
Our first result reveals a phase-transition phenomenon at threshold $\frac{k-1}{k}$ for $\alpha$-winning committees.
\begin{theorem}[Proved in \cref{sec:winning}] For all constants $k \in \mathbb{N}^+$ and $\alpha \in [0, 1]$, there exists a number $N\in\mathbb{N}$ such that for every $n > N$, it holds that
\[
\lim_{m \to \infty}\Pr_{\mathrm{IC}(n,m)}\left[\text{An $\alpha$-winning committee exists} \right] = \begin{cases}
1&\text{if }\alpha< \frac{k-1}{k},\\[3pt]
0&\text{if }\alpha> \frac{k-1}{k}.
\end{cases}
\]
\end{theorem}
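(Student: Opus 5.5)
The plan is to reduce everything to a scale-free description of how high a committee sits in each voter's ranking, and then to compare the number of candidate committees with the number of dangerous outside candidates. For a committee $S$ and voter $i$, let $r_i(S)$ be the position of the best member of $S$ in voter $i$'s ranking, and write $r_i(S) = m^{1-x_i}$ with $x_i \in [0,1]$. An outside candidate $a$ defeats $S$ in voter $i$ exactly when $a$ is ranked above position $r_i(S)$. So $S$ fails to be $\alpha$-winning iff some outside $a$ is ranked above $S$ in every voter of a set $T$ with $|T| > (1-\alpha)n$.

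Conditioned on $S$'s positions, each other candidate does this for a fixed $T$ with probability about $\prod_{i\in T} m^{-x_i}$. Hence the expected number of such $a$ is about $m^{1-\sum_{i\in T}x_i}$. Up to $m^{o(1)}$ factors and a union over the $2^n$ sets $T$, $S$ survives iff $\sum_{i\in T}x_i \ge 1$ for every $T$ of size $\lceil (1-\alpha)n\rceil$. On the other side, a member $b$ that is responsible for the top of $S$ on a set $V_b$ of voters sits within the top $m^{1-x_i}$ of every $i\in V_b$ with probability $m^{-\sum_{i\in V_b}x_i}$. The expected number of $k$-sets realizing a profile $x$ is therefore roughly $m^{k-\sum_i x_i}$. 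The threshold should come from the LP
\[
\min \sum_i x_i \quad\text{subject to}\quad \sum_{i\in T}x_i\ge 1 \text{ for all } |T|=\lceil(1-\alpha)n\rceil,
\]
whose value by symmetry and averaging is $n/\lceil(1-\alpha)n\rceil \approx 1/(1-\alpha)$. Comparing with $k$ gives $\alpha$ versus $1-1/k$.

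\textbf{Existence for $\alpha<\frac{k-1}{k}$.} Take $n>N$ large enough that the roundings of $n/k$ and $(1-\alpha)n$ do not destroy the strict inequality $\lceil n/k\rceil < \lceil(1-\alpha)n\rceil + 1$ with room to spare. Partition the voters into $k$ groups $V_1,\dots,V_k$ of size at most $\lceil n/k\rceil$, and fix $x$ slightly larger than $1/\lceil(1-\alpha)n\rceil$ with $\lceil n/k\rceil\, x<1$. For each $j$, the number of candidates ranked in the top $m^{1-x}$ by all voters in $V_j$ has mean $m^{1-|V_j|x}\to\infty$. Since these counts are sums of independent indicators, a Chernoff bound shows one exists whp; pick distinct $b_j$ and set $S=\{b_1,\dots,b_k\}$, so every voter has a member of $S$ in their top $m^{1-x}$. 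For any $T$ with $|T|>(1-\alpha)n$, the expected number of outside $a$ ranked above position $m^{1-x}$ in all of $T$ is at most $m^{1-|T|x}\to 0$. Conditioning on the choice of $S$ changes this only negligibly, since it depends on finitely many candidates. A union bound over the $2^n$ sets $T$ then finishes this direction.

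\textbf{Nonexistence for $\alpha>\frac{k-1}{k}$.} This is a first-moment argument over all $\binom mk$ committees. Discretize each $x_i$ to a grid of mesh $\delta$, which leaves only $O_\delta(1)$ profiles. Also record which member of $S$ attains the top in each voter, giving a partition of the voters into the $V_b$. For a fixed profile, the probability that a given $S$ has it is at most $m^{-\sum_i x_i+O(\delta n)}$. If $\sum_i x_i \ge k+\eta$, the union over committees contributes $m^{-\eta+O(\delta n)}\to0$. If $\sum_i x_i< k+\eta$, then since $1/(1-\alpha)>k$ and $n>N$, the LP value exceeds $k+\eta$ for small $\eta$. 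So some $T$ of size $\lceil(1-\alpha)n\rceil$ has $\sum_{T}x_i\le 1-\eta'$. Given $S$'s positions, the remaining $m-k$ candidates are independent, so the probability that no outside candidate beats $S$ on all of $T$ is about $\exp(-m^{\eta'})$. This beats even the $m^k$ union bound.

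\textbf{Main obstacle.} I expect the hardest step to be making the nonexistence bound rigorous uniformly over profiles. The concerns are handling the dependence between a voter's ranks of the $k$ members and of outside candidates, the rank ties at the boundary $x_i\in\{0,1\}$ (for example a member literally ranked first), and the $m^{O(\delta n)}$ losses from discretization. I would handle the boundary by capping $x_i\le 1$, where the corresponding term is trivially satisfied, and by choosing $\delta$ after $\eta,\eta'$, which are fixed by the gap $\alpha-\frac{k-1}{k}$ and by $N$.
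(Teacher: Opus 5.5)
Your proposal is correct, but it follows a different route from the paper, most visibly on the negative side.

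\textbf{Positive side.} For $\alpha<\frac{k-1}{k}$ the paper uses the same partition into $k$ groups. It takes $c_i$ to be the maximizer of $\sum_{v\in V_i}s_{v,a}$. For a fixed outsider $a_0$, it bounds the probability of beating $S$ on $\lceil(1-\alpha)n\rceil$ voters by a product over groups of terms of order $(\ln m/m)^{q_i/\lceil n/k\rceil}$. Your threshold selection gives the same exponent comparison more cleanly. You pair it with the unconditional fact that, with high probability, no candidate at all lies in the top $m^{1-x}$ of more than $\lfloor(1-\alpha)n\rfloor$ voters. This needs no conditioning on $a_0\notin S$, so your remark about conditioning can simply be dropped.

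\textbf{Negative side.} For $\alpha>\frac{k-1}{k}$ the paper avoids committees altogether. By pigeonhole, some member $c$ is the top of $S$ for at least $\lceil n/k\rceil$ voters. Hence $c$ beats every other candidate in at least two voters, which \cref{lem:win2} rules out with high probability. Your first-moment/LP argument is essentially the proof of \cref{lem:win2} generalized directly to $k$-sets. That proof splits on $B(\mathbf{s})$, uses a Gamma tail, and averages over $(n-1)$-subsets; you do the same with blocking sets of size about $(1-\alpha)n$. Your version makes the count $m^{k-\sum x_i}$ versus $m^{1-\sum_T x_i}$ explicit and shows why the threshold is $1/(1-\alpha)$ versus $k$. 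The cost is more bookkeeping: argmax patterns, and an $m^k$ union bound that must be beaten by $\exp(-m^{\eta'})$. The paper's pigeonhole reduction is shorter.

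\textbf{Minor points.}
(i) Failure of $\alpha$-winning requires a blocking set with $|T|\ge\lfloor(1-\alpha)n\rfloor+1$, not $\lceil(1-\alpha)n\rceil$. The two differ when $(1-\alpha)n$ is an integer. You handle this correctly on the positive side but not in the negative-side argument. The averaging bound still gives $\sum_T x_i\le\bigl((1-\alpha)+\frac1n\bigr)(k+\eta)<1-\eta'$ for large $n$, so nothing breaks.

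(ii) The indicators ``$a$ is in the top $m^{1-x}$ of voter $v$'' are not independent across $a$. This, and your stated main obstacle, disappear if you work in the score model. Set $1-\max_{b\in S}s_{v,b}=m^{-x_v}$.
\begin{itemize}
\item Conditional on the columns of $S$, the outside columns are exactly i.i.d., so no blocker exists with probability at most $\exp\bigl(-(m-k)m^{-(1-\eta')}\bigr)$.
\item After fixing which member is top for each voter ($k^n$ choices), the event $\sum_v x_v\ge k+\eta$ is contained in a Gamma tail for a product of $n$ independent uniforms. Its probability is $\widetilde{O}_n\bigl(m^{-(k+\eta)}\bigr)$.
\end{itemize}
So no discretization or negative-association argument is needed.
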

When $\alpha$ is below the threshold, our result is positive: an $\alpha$-winning committee exists with high probability for large $m$. When $\alpha$ is above the threshold, our result can be viewed as a probabilistic extension of the Condorcet paradox for large $m$. In the special case of $k=1$, for every constant $\alpha > 0$, an $\alpha$-winning committee is unlikely to exist when $\alpha n$ is greater than $1$ and $m$ is sufficiently large (\cref{lem:win2}).

Our second result reveals a phase transition phenomenon at threshold  $\frac{k-1}{2k}$ for $\alpha$-dominating committees. 
\begin{theorem}[Proved in \cref{sec:dominating}]
For all constants $k \in \mathbb{N}^+$ and $\alpha \in [0, 1]$, there exists a number $N\in\mathbb{N}$ such that for every $n > N$, it holds that
\[
\lim_{m \to \infty}\Pr_{\mathrm{IC}(n,m)}\left[\text{An $\alpha$-dominating committee exists} \right] =  \begin{cases}
1&\text{if }\alpha< \frac{k-1}{2k},\\[3pt]
0&\text{if }\alpha> \frac{k-1}{2k}.
\end{cases}
\]
\end{theorem}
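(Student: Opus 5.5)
The plan is to reduce the $m \to \infty$ limit to a deterministic continuous game on a simplex, and then show that the value of that game is exactly $\frac{k-1}{2k}$.

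\textbf{Step 1: reduction to a log-scale game.} Fix $n$ and let $m\to\infty$. For a candidate $c$ and voter $i$, write its rank in voter $i$ as $m^{1-s_i(c)}$, so $s_i(c)\in[0,1]$ measures how close to the top $c$ is in voter $i$. A first-moment/second-moment computation will show the following. The number of candidates whose depth profile satisfies $s_i(c)\ge s_i$ for all $i$ is about $m^{1-\sum_i s_i}$. Hence, with probability $1-o(1)$, the attainable profiles are (up to $o(1)$ perturbations) exactly the simplex $\Delta=\{s\ge 0: \sum_i s_i\le 1\}$, and every open region of $\Delta$ is hit by many candidates. Moreover, $b$ beats $a$ in voter $i$ whenever $s_i(b)>s_i(a)+\varepsilon$. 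When the profiles are $\varepsilon$-close in coordinate $i$, the comparison is decided at lower order and behaves like a fair coin, independently across voters. Concentration over these coins is where $n>N$ is used.

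With this reduction, the problem becomes a game. The committee chooses $b_1,\dots,b_k\in\Delta$. The adversary then chooses any $x\in\Delta$, ties counting as $1/2$. $S$ is $\alpha$-dominating iff for every $x$ some $j$ has $\#\{i: b_{j,i}>x_i\}+\tfrac12\#\{i: b_{j,i}=x_i\}\ge \alpha n$. As a sanity check, for $k=1$ the adversary takes $x=b_1$ and gets about $n/2$ ties, which yields threshold $0=\frac{k-1}{2k}$.

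\textbf{Step 2: impossibility for $\alpha>\frac{k-1}{2k}$.} Given any $b_1,\dots,b_k\in\Delta$, I would construct $x$ probabilistically, then derandomize by taking a good realization. In each coordinate $i$, sort $b_{1,i},\dots,b_{k,i}$ and set $x_i$ equal to one of these $k$ values. Choose the index so that the budget $\sum_i x_i\le 1$ holds: taking $x_i=b_{J,i}$ for a uniform $J$ gives $\mathbb{E}\sum_i x_i\le 1$. Coupling the choices across coordinates (a uniformly random rotation of the sorted order in each coordinate) makes the expected score of each fixed $b_j$ equal to $\sum_i\big(\tfrac{k-1}{2k}\big)$. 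This is because $b_j$ strictly exceeds the chosen value with probability (its rank $-1$)$/k$, and averaging over a random rotation gives $(k-1)/(2k)$ per coordinate, plus tie contributions. The budget must hold for one realization while every $b_j$ stays below $\alpha n$. I would write this as a small LP over distributions on coordinatewise rank choices, use LP duality to certify a feasible $x$ with all scores $\le \frac{k-1}{2k}n+o(n)$, and round the fractional solution coordinate by coordinate with only $O(k)$ coordinates changed. Finally, Step 1 supplies an actual candidate near $x$ whose tie coins concentrate.

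\textbf{Step 3: existence for $\alpha<\frac{k-1}{2k}$.} Here I would construct the committee explicitly. Split the voters into $k$ blocks and give $b_j$ a cyclically shifted staircase profile: in block $\ell$, $b_j$ sits at the $((j-\ell)\bmod k)$-th of $k$ equally spaced depth levels, normalized so $\sum_i b_{j,i}\le 1$. The complementary LP-dual argument then has to show that any $x\in\Delta$ which beats each $b_j$ on more than $(1-\alpha)n$ coordinates violates $\sum_i x_i\le 1$. This amounts to summing the constraints weighted by the dual solution of Step 2, so the two bounds meet at $\frac{k-1}{2k}$. Step 1 guarantees that candidates realizing the $b_j$ exist with high probability.

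\textbf{Main obstacle.} The hard part is Step 1 made rigorous: showing that the discrete IC profile is faithfully captured by the simplex game uniformly over all committees. The difficulty is the need to union-bound over $\binom{m}{k}$ committees, and to handle near-ties between depths at lower-order scales. My first attempt would discretize depths to a grid of mesh $\varepsilon$, apply the first/second-moment bounds on each grid cell, and absorb boundary effects into the slack between $\alpha$ and $\frac{k-1}{2k}$.
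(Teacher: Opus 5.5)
Your Step 3 matches the paper's positive side. The cyclic staircase is the $r$-cyclic-threshold committee of \cref{def:cyclic}. Summing the $k$ constraints with equal weights shows that an outsider which every member beats on fewer than $\alpha n$ voters would need total log-depth about $1+\frac{2\varepsilon k}{k-1}>1$, and a union bound over outsiders excludes this. The paper gives every threshold tie to the outsider, so this side does not depend on a tie convention.

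The negative side has a genuine gap, located exactly where $\frac{k-1}{2k}$ (rather than $\frac12$) comes from. Under your Step 1 convention, a coordinate where the outsider copies $b_j$ is a fair coin. Write $D'_{i,j}$ for the expected number of voters on which an outsider copying $b_j$ everywhere beats $b_i$, with ties worth $\frac12$. Then $D'_{i,j}+D'_{j,i}=n$ for all $i,j$, including $D'_{i,i}=n/2$. Hence, for any common mixture $x$ over copies, the $x$-weighted average of the members' expected scores is exactly $n/2$. Some member keeps at least $n/2$, so nothing below $\alpha=\frac12$ is excluded.

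Your own tally shows this: ``$\frac{k-1}{2k}$ per coordinate, plus tie contributions'' equals $\frac{k-1}{2k}+\frac{1}{2k}=\frac12$. Your $k=1$ check shows it too: $x=b_1$ with fair ties leaves $b_1$ about $n/2$ wins. The actual reason the $k=1$ threshold is $0$ is that some candidate \emph{strictly} beats $b_1$ on $n-1$ voters (\cref{lem:win2}). The fair-coin claim is also false as a model, because the outsider is chosen existentially, so lower-order effects are selected rather than averaged.

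What you need is that the outsider strictly beats $b_j$ wherever it copies $b_j$, i.e.\ $D_{i,i}=n$. This costs budget that your proposal never secures. The paper gets the slack from three facts:
\begin{itemize}
\item W.h.p.\ every candidate has $\prod_v(1-s_{v,a})>(2\ln m/m)^{1+1/n^2}$. So members can be rounded \emph{up} to the grid $\mathrm{DS}$ with $B(\bm{\beta_i})>(2\ln m/m)^{(n^4-1)/n^4}$.
\item Because $B$ carries the exponent $\frac{n-1}{n}$, every grid target with $B(\bm{\beta_0})>2\ln m/m$ is strictly exceeded by a real candidate in all but its deepest coordinate. This holds simultaneously for all $n^{2n}$ grid vectors (\cref{lem:good-rounding}).
\item Markov's inequality on $-\ln B(\bm{\beta^*})$ keeps the budget with probability at least $n^{-4}$, which exceeds the Hoeffding failure probability $ke^{-2n^{1/3}}$.
\end{itemize}
This also resolves your Step 1 worry. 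No full reduction and no union bound over $\binom{m}{k}$ committees are needed, only a union bound over the $m$-independent grid.

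Separately, the uniform/``rotation'' mixture in Step 2 fails. The rank of $b_j$ in each coordinate is fixed by the committee. With $x_i=b_{J,i}$ for uniform $J$, member $b_j$ strictly beats $x$ on $\frac1k\sum_i(\mathrm{rank}_i(j)-1)$ voters in expectation. This equals $\frac{k-1}{2k}n$ only on average over $j$; a member lying above all the others in every coordinate gets $\frac{k-1}{k}n$. A nonuniform mixture is necessary.

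The step your appeal to LP duality leaves out is the paper's estimate. Let $D_{i,j}=|\{v:\beta_{i,v}\le\beta_{j,v}\}|$. Then every distribution $y$ on $[k]$ satisfies
\[
\max_j\sum_i y_iD_{i,j}\ \ge\ \sum_{i,j}y_iy_jD_{i,j}=\frac12\sum_{i,j}y_iy_j\bigl(D_{i,j}+D_{j,i}\bigr)\ \ge\ \frac n2\Bigl(1+\sum_i y_i^2\Bigr)\ \ge\ \frac{k+1}{2k}\,n .
\]
By the minimax theorem, some mixture then ties-or-beats every $\bm{\beta_i}$ on at least $\frac{k+1}{2k}n$ voters in expectation. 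The argument uses exactly $D_{i,i}=n$ and $D_{i,j}+D_{j,i}\ge n$.

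Once that bound and the tie-breaking slack above are in place, your deterministic vertex rounding would be a legitimate substitute for the paper's Hoeffding-plus-Markov rounding. A vertex of the LP has at most $k+1$ fractional coordinates; setting them to depth $0$ preserves the budget and costs only $O(k)$ voters.
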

Similarly, our result is positive when $\alpha$ is below the threshold and is an extension of Condorcet's paradox when $\alpha$ is above the threshold. Additionally, in the style of basic probabilistic methods, a straightforward corollary of the $\alpha>\frac{k-1}{2k}$ case deserves attention. Notice that in this case, for certain ranges of $n$ and $m$, there exists a preference profile (a collection of agents' preferences) under which no $\alpha$-dominating committee exists (because otherwise, the probability of its existence would be $1$). This observation is formally stated in the following corollary.
\begin{corollary}
\label{cor:dom_worst_case}
For all constants $\alpha \in [0, 1]$ and $k \in \mathbb{N}^+$ with $\alpha > \frac{k - 1}{2k}$, there exists a preference profile in which no $\alpha$-dominating $k$-committee exists.
\end{corollary}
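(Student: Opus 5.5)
The plan is to derive the corollary from the $\alpha > \frac{k-1}{2k}$ case of the preceding theorem on $\alpha$-dominating committees, using the basic probabilistic method. Fix constants $\alpha \in [0,1]$ and $k \in \mathbb{N}^+$ with $\alpha > \frac{k-1}{2k}$. The theorem gives a number $N$ such that for every $n > N$,
\[
\lim_{m \to \infty}\Pr_{\mathrm{IC}(n,m)}\left[\text{An $\alpha$-dominating committee exists}\right] = 0 .
\]
I will fix any single $n > N$, say $n = N+1$, and treat it as constant from here on.

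By the definition of the limit, there is some $m_0$ such that for every $m \geq m_0$ the probability above is strictly less than $1$. Fix one such $m$, and take $m \geq k$ so that $k$-committees make sense. The sample space $\mathrm{IC}(n,m)$ is the uniform distribution over the finitely many profiles of $n$ rankings of $m$ candidates, and every profile has positive probability. If every profile admitted an $\alpha$-dominating $k$-committee, the event would have probability exactly $1$, which is a contradiction. So at least one profile with these parameters $n$ and $m$ admits no $\alpha$-dominating $k$-committee, and this is the profile the corollary asks for.

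I do not expect a real obstacle here, since all the difficulty sits in the theorem. The only care needed is in quantifiers. The theorem gives the existence claim only for suitable $n$ (large) and $m$ (large depending on $n$). The corollary claims existence for some $n$ and $m$, not for all of them. If one wanted the stronger statement ``for all sufficiently large $n$, and then all sufficiently large $m$,'' the same argument gives it directly. I would also note that the constant $\alpha$ enters only through the theorem, so the case $\alpha = 1$ and the degenerate case $k = 1$ (threshold $0$) need no separate treatment.
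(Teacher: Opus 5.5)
Your proposal is correct and matches the paper's argument. The paper derives the corollary from the $\alpha > \frac{k-1}{2k}$ case of \cref{thm:alpha-dominating} using the same observation: for suitable large $n$ and $m$, the probability that an $\alpha$-dominating $k$-committee exists is below $1$, so some profile (each of which has positive probability under IC) admits none.
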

Prior to our work, the best-known non-existence result was $\alpha>\frac{1}{3}$ for $k = 2$ (implied by the impossibility of $\alpha$-winning committees~\cite[Proposition 2]{Elkind2015:Condorcet}) and was $\alpha>\frac{1}{2} - \frac{1}{800k}$ for $k \geq 3$~\cite{Charikar2026:Approximately}. \cref{cor:dom_worst_case} improves this bound to $\alpha>\frac{1}{2} - \frac{1}{2k}$ for all $k \geq 2$.

Moreover, it has been conjectured that there always exists a $\frac{1}{3}$-winning committee of $2$ alternatives (see, e.g.,~\cite[Conjecture 1]{Charikar2025:Six-Candidates}). \cref{cor:dom_worst_case} shows that such a conjecture cannot be proved by a simple reduction to $\frac{1}{3}$-dominating $2$-committee, as these sets may not exist in general. This impossibility was not known before.

Our results are algorithmic: in each of our positive results, we show that a certain construction (which can be computed in polynomial time) of a $k$-committee is $\alpha$-winning/dominating with high probability.

\noindent\paragraph{Technical innovations.} Previous work for fixed $m$ and $n \rightarrow \infty$ primarily views the preference profile as the sum of independent (multi-dimensional) random variables---one for each agent---and then applies concentration bounds to analyze this sum. This approach does not work in our setting, as we assume the parameter $n$ to be fixed. Instead, we take the natural equivalent view~\cite{Sauermann2022:On-the-probability} of IC as generating an independent score uniformly from $[0, 1]$ for each agent--alternative pair; the agents then rank the alternatives according to their scores. Our proofs are based on carefully analyzing score distributions and identifying their underlying structures that admit or forbid the existence of $\alpha$-winning or dominating committees. Our ``large $m$ setting'' is much more technically demanding than the ``large $n$ setting,'' and we need to develop different approaches for our two main theorems.

The proofs of both sides of \cref{thm:alpha-winning} hinge on a crucial observation (\cref{lem:win2}): when $m$ is large, with high probability, there does not exist an alternative $a$ such that for every other alternative $b$, at least two agents prefer $a$ to $b$. Once we have established this observation, the negative side (when $\alpha > \frac{k - 1}{k}$) follows from a rather short and clean argument, and the positive side (when $\alpha < \frac{k - 1}{k}$) is given by a committee formed by the ``best'' alternatives for different agent groups.

The proofs for both sides of \cref{thm:alpha-dominating}, in our view, are more challenging. For the positive side (when $\alpha < \frac{k - 1}{2k}$), the crucial step is to identify a subset of alternatives that are both good and complementary in a cyclic manner (\cref{def:cyclic}). It turns out that such a subset (with proper parameters) is $\alpha$-dominating with high probability. For the negative side (when $\alpha > \frac{k - 1}{2k}$), to prove that an $\alpha$-dominating committee is unlikely to exist, we use duality arguments to show that every committee of size $k$ is blocked by a hypothetical ``mixed'' alternative. We then perform independent rounding on the ``mixed'' alternative and prove, via concentration bounds, that (1) the rounding result can, with high probability, block the committee as well, and (2) a real alternative in the election behaves similarly to the rounding result.
 
\subsection{Other Related Work}

\paragraph{The Condorcet criterion.} The Condorcet criterion is a well-studied principle that has ``nearly universal acceptance''~\cite[Page 46]{Saari1995:Basic}. The study of understanding and circumventing Condorcet's paradox has shaped the field of social choice theory. For example, Arrow's impossibility theorem~\cite{Arrow63:Social} is viewed as an extension to Condorcet's paradox---the intransitivity in social preferences is inevitable under all combinations of pairwise rules (formally, rules satisfying \emph{independence of irrelevant alternatives}) that are Pareto efficient and non-dictatorial. Another classical and important direction is domain restriction, i.e., restricting agents' preferences. For example, when agents' preferences are single-peaked, the Condorcet winner always exists~\cite{Black48:Rationale}. Sen~\cite{Sen1966:A-Possibility} identified a domain restriction for the existence of the Condorcet winner, and such domains are called \emph{Condorcet domains}. See the survey of Elkind, Lackner, and Peters~\cite{Elkind2016:Preference} for some recent developments.

\paragraph{Existence of Condorcet committees: Worst-case results.}  Elkind, Lang, and Saffidine~\cite{Elkind2015:Condorcet} proved the existence of a Condorcet-winning and Condorcet-dominating $(\lceil \log_2 m \rceil + 1)$-committee and conjectured the existence of a Condorcet-winning $3$-committee, which remains an open problem to this day. The work of Jiang, Munagala, and Wang~\cite{Jiang2020:Approximately} on approximate core-stability in committee selection implies the existence of a Condorcet-winning $32$-committee. Charikar, Lassota, Ramakrishnan, Vetta, and Wang~\cite{Charikar2025:Six-Candidates} proved the existence of an $\alpha$-winning $k$-committee when $\frac{1-\alpha}{1-\ln(1-\alpha)}\ge\frac{2}{k+1}$ (among some other parameter ranges), which implies that a Condorcet-winning $6$-committee always exists.
Nguyen, Song, and Lin~\cite{Nguyen2026:A-few-good} used novel constructions of market equilibria to improve this bound to $\alpha \cdot k^{\frac{k}{k-1}} \leq k - 1$ for all $k \ge 2$, which guarantees the existence of a Condorcet-winning $5$-committee. Independent works of Charikar, Ramakrishnan, and Wang~\cite{Charikar2026:Approximately} and Bourneuf, Charbit, and Thomass\'{e}~\cite{Bourneuf2025:Neighborhood} proved the existence of $\alpha$-dominating $k$-committees when $\alpha<\frac{1}{2}$ and $k=O\left(\left(\frac{1}{2}-\alpha\right)^{-2}\right)$.

\renewcommand{\arraystretch}{2}
\begin{table}[t]
\centering
\begin{tabular}{|c|c|c|}
\hline
\bf Model &\bf  $\alpha$-winning threshold &\bf  $\alpha$-dominating threshold\\
\hline \hline
Worst case & 
\begin{tabular}{@{}c@{}}
$\left[1 - \frac{5.10}{k}, 1 - \frac{2}{k + 1}\right]$\\
\textcolor{gray}{\footnotesize (\cite{SongN26}, \cite[Proposition 2]{Elkind2015:Condorcet})} 
\end{tabular}
& 
\begin{tabular}{@{}c@{}}
$\left[\frac{1}{2} - \Theta\bigl(\frac{1}{\sqrt{k}}\bigr), \frac{1}{2} - \frac{1}{2k}\right]$\\
\textcolor{gray}{\footnotesize (\cite{Charikar2026:Approximately,Bourneuf2025:Neighborhood}, \cref{cor:dom_worst_case})} \\
\end{tabular}
\\
\hline
IC ($n\rightarrow\infty$) &  $1 - \frac{1}{k + 1}$ \textcolor{gray}{\footnotesize (\cite[Corollary 2]{Elkind2015:Condorcet}, \cref{prop:large_n})} &  $\frac{1}{2}$ \textcolor{gray}{\footnotesize (\cref{prop:large_n})} \\
\hline
IC ($m\rightarrow\infty$) & $1 - \frac{1}{k}$ \textcolor{gray}{\footnotesize (\cref{thm:alpha-winning})} & $\frac{1}{2} - \frac{1}{2k}$ \textcolor{gray}{\footnotesize (\cref{thm:alpha-dominating})} \\
\hline
\end{tabular}
\caption{\small Existence thresholds for $\alpha$-winning and $\alpha$-dominating $k$-committees. When $\alpha$ is below the threshold, a committee exists with certainty (worst case) or with overwhelming probability (IC); when $\alpha$ is above the threshold, such committees are non-existent with certainty (worst case) or with overwhelming probability (IC).}
\label{tab:thresholds}
\end{table}
\renewcommand{\arraystretch}{1}

\paragraph{Existence of Condorcet committees: Beyond the worst case.} There is a large body of literature on the asymptotic probability of the (non-)existence of Condorcet winners for fixed $m$ under i.i.d.\@ distributions, especially in impartial cultures~\cite{Niemi1968:A-mathematical,DeMeyer1970:The-Probability,Gehrlein1976:The-probability,Tsetlin2003:The-impartial,Jones1995:Condorcet,Krishnamoorthy2005:Condorcet,Green-Armytage2016:Statistical,Brandt2016:Analyzing,Brandt2019:Exploring} and under semi-random models~\cite{Xia2020:The-Smoothed,Xia2021:Semi-Random}. Progress over the years can be found in the books by Gehrlein and Lepelley~\cite{Gehrlein2011:Voting} and by Diss and Merlin~\cite{Diss2021:Evaluating}. Elkind, Lang, and Saffidine~\cite{Elkind2015:Condorcet} proved that a Condorcet-winning $2$-committee exists with high probability under IC for fixed $m$ as $n\rightarrow\infty$. A few works studied the asymptotic probability for a Condorcet winner to exist under IC for fixed $n$ as $m$ goes to infinity. May~\cite{May1971:Some} proved (among other results) that for every fixed $n\ge 3$, the probability converges to $0$ as $m\rightarrow \infty$. Sauermann~\cite{Sauermann2022:On-the-probability} refined the convergence rate to $\Theta\left(m^{\frac{n-1}{n+1}}\right)$ for odd $n$, along with its tight leading constant and the next higher-order term. Pittel~\cite{Pittel2025:On-Likelihood} studied the case where both $m$ and $n$ go to infinity with $n\gg m^4$, and bounded the probability of the existence of a Condorcet winner by $O\left(m^{-\ell}+m^2/\sqrt n\right)$ for every $\ell > 0$. We are not aware of any previous work on the setting of $\alpha$-winning or $\alpha$-dominating $k$-committees for $k > 1$ under IC with fixed $n$ and $m \rightarrow \infty$, which we address in our work and reveal its phase-transition phenomena. \cref{tab:thresholds} summarizes our results and some of the most relevant previous work.

While IC may differ from the distributions in practice (so do other models, as ``all models are wrong''~\cite{Box1979:Robustness}), we believe that the insights revealed by our work (phase-transition phenomena in \cref{thm:alpha-winning,thm:alpha-dominating}), the implications of our work for worst-case analysis (\cref{cor:dom_worst_case}), and our proof techniques can be helpful for future studies on this promising topic motivated by novel social choice applications.
 
\section{Preliminaries}
Throughout the paper, we use the notation $[n]$ to denote the set $\{1, 2, \ldots, n\}$.

\paragraph{Profile.}
Consider an election with $m$ candidates, labeled $1,2,\dots,m$, and $n$ voters, labeled $1,2,\dots,n$.
Each voter $v \in [n]$ has a strict linear preference order (ranking) $R_v$ over $[m]$.
A \emph{profile} (with $n$ voters and $m$ candidates) is the collection $\mathcal{P} = (R_v)_{v\in [n]}$.
In our writing, we often use $v,v^\prime$ to refer to voters in $[n]$ and $a,b,c$ to refer to candidates in $[m]$.

Given a profile $\mathcal{P} = (R_v)_{v\in [n]}$ with $n$ voters and $m$ candidates, we write $a \succ_{R_v} b$ if the voter $v$ strictly (that is, $a \neq b$) prefers the candidate $a$ over the candidate $b$.
For a committee $S \subseteq [m]$, if $v$ prefers at least one candidate in $S$ over $a$, we write $S \succ_{R_v} a$; conversely, if $v$ prefers $a$ over every candidate in $S$, we write $a \succ_{R_v} S$.
We use 
\[
a\succ_{\mathcal{P}} b \coloneqq \{v\in [n]:a\succ_{R_v} b\}
\]
to denote the set of voters in $[n]$ who prefer $a$ over $b$.
Similarly, we use 
\[
S\succ_{\mathcal{P}} a \coloneqq \{v\in [n]:S\succ_{R_v} a\}
\]
to denote the set of voters in $[n]$ who prefer at least one candidate in $S$ over $a$ and
\[
a\succ_{\mathcal{P}} S \coloneqq \{v\in [n]:a\succ_{R_v} S\}
\]
to denote the set of voters in $[n]$ who prefer $a$ over every candidate in $S$.

In this paper, we consider two existing notions of collective dominance.
\begin{definition}[$\alpha$-winning]
\label{def:winning}
Given a profile $\mathcal{P}$ with parameters $n$ and $m$, a committee $S\subseteq [m]$ is \emph{$\alpha$-winning} if for all $a\in [m]\setminus S$, we have $|S\succ_{\mathcal{P}} a|\geq \alpha n$.
\end{definition}

\begin{definition}[$\alpha$-dominating]
\label{def:dominating}
Given a profile $\mathcal{P}$ with parameters $n$ and $m$, a committee $S\subseteq [m]$ is \emph{$\alpha$-dominating} if for all $a\in [m]\setminus S$, there exists $b\in S$ such that $|b\succ_\mathcal{P} a|\geq \alpha n$.
\end{definition}

Note that every $\alpha$-dominating set is also an $\alpha$-winning set.

\paragraph{Impartial Culture.}
Given parameters $n$ and $m$, the \emph{impartial culture} $\mathrm{IC}(n, m)$ generates a profile $\mathcal{P} = (R_v)_{v\in [n]}$ with $n$ voters and $m$ candidates in the following way:
Let $\mathrm{Perm}(m)$ denote the set of all $m!$ possible linear orders over $[m]$. Each voter's ranking $R_v$ is drawn
independently and uniformly at random from $\mathrm{Perm}(m)$.
In other words, voters’ preferences are independent and identically distributed (i.i.d.\@) according to the uniform distribution over all possible rankings.

\begin{definition}[$\alpha$-winning probability function]
\label{def:f}
Given an impartial culture with $n$ voters and $m$ candidates, let the \emph{$\alpha$-winning probability function} $f_{\alpha,k}(n,m)$ denote the probability that there exists an $\alpha$-winning committee $S \subseteq [m]$ of size $k$, that is,
\[
f_{\alpha,k}(n,m)
\coloneqq
\Pr_{\mathcal{P} \sim \mathrm{IC}(n,m)}\left[
  \exists S \subseteq [m] \text{ s.t.\@ } |S| = k \text{ and }
  S \text{ is } \alpha\text{-winning in } \mathcal{P}
\right].
\]
\end{definition}

\begin{definition}[$\alpha$-dominating probability function]
\label{def:g}
Given an impartial culture with $n$ voters and $m$ candidates, 
let the \emph{$\alpha$-dominating probability function} $g_{\alpha,k}(n,m)$ denote the probability that there exists an $\alpha$-dominating committee $S \subseteq [m]$ of size $k$, that is,
\[
g_{\alpha,k}(n,m)
\coloneqq
\Pr_{\mathcal{P} \sim \mathrm{IC}(n,m)}\left[
  \exists S \subseteq [m] \text{ s.t.\@ } |S| = k \text{ and }
  S \text{ is } \alpha\text{-dominating in } \mathcal{P}
\right].
\]
\end{definition}

\paragraph{Score.}
Here is an equivalent method to generate the profile $\mathcal{P}=(R_v)_{v\in [n]}$ in an impartial culture with $n$ voters and $m$ candidates: Each voter $v$ independently samples a score $s_{v,a}$ for each candidate $a$ uniformly at random from $[0,1]$. (Note that ties occur with probability $0$.) The voter $v$ ranks the candidates in decreasing order of these scores, thus obtaining $v$'s ranking $R_v$.
In this way, 
\[
a\succ_{R_v} b \quad\iff\quad s_{v,a}>s_{v,b}
\]
and
\[
S\succ_{R_v} a \quad\iff\quad \max_{c\in S}s_{v,c}>s_{v,a}.
\]
We define the score matrix $\mathbf{S}\coloneqq (s_{v,a})_{v\in [n], a\in [m]}$, where $s_{v,a}\sim \mathrm{Unif}[0,1]$.

\paragraph{Asymptotic Notations.}
Let \(F(n, m)\) and \(G(n, m)\) be non-negative functions in \(n\) and \(m\). We use the following notations:
\begin{itemize}
  \item \(F(n,m) = O_n(G(n,m))\) if there exists a function \(C(n)\) that depends only on \(n\) such that for all sufficiently large \(m\),
  \[
    F(n,m) \le C(n)\, G(n,m).
  \]

  \item \(F(n,m) = \widetilde{O}_n(G(n,m))\) if there exist functions \(C(n)\) and \(p(n) \ge 0\) that depend only on \(n\) such that for all sufficiently large \(m\),
  \[
    F(n,m) \le C(n)\, (\ln m)^{p(n)}\, G(n,m).
  \]

  \item \(F(n,m) = o_n(G(n,m))\) if there exists a function \(C(n)\) that depends only on \(n\) such that for every constant \(\varepsilon>0\) there exists a constant \(M(\varepsilon)\) such that for all \(m\ge M(\varepsilon)\),
  \[
    F(n,m) \le \varepsilon\, C(n)\, G(n,m).
  \]
\end{itemize}
 
\section{\texorpdfstring{$\alpha$}{Alpha}-Winning Sets in the Limit}
\label{sec:winning}
In this section, we study the probability that there exists an $\alpha$-winning set (\cref{def:winning}) of size $k$ in a large impartial culture in which the number of candidates, $m$, is much greater than the number of voters, $n$. More precisely, we will prove \cref{thm:alpha-winning}. Recall that $f_{\alpha,k}(n,m)$ denotes the probability that there exists an $\alpha$-winning committee $S \subseteq [m]$ of size $k$ in an impartial culture with $n$ voters and $m$ candidates.
\begin{theorem}
\label{thm:alpha-winning}
For all constants $k \in \mathbb{N}^+$ and $\alpha \in [0, 1]$, there exists a number $N\in\mathbb{N}$ such that for every $n > N$, it holds that
\[
\lim_{m \to \infty} f_{\alpha,k}(n,m) =
\begin{cases}
1&\text{if }\alpha< \frac{k-1}{k},\\[3pt]
0&\text{if }\alpha> \frac{k-1}{k}.
\end{cases}
\]
\end{theorem}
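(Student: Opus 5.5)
The proof will rest on the key observation flagged in the introduction (\cref{lem:win2}): for fixed $n\ge 2$, as $m\to\infty$, with probability tending to $1$ there is no candidate $a$ such that for every other candidate $b$ at least two voters prefer $a$ to $b$. I would prove this first with a union bound over $a$ in the score model. Fix $a$ and condition on its scores $x_v=s_{v,a}$. A single other candidate $b$ is ``bad'' (beaten by $a$ in at most one voter) with probability
\[
p(x)=\prod_v x_v+\sum_v (1-x_v)\prod_{v'\ne v}x_{v'} .
\]
Independence over the $m-1$ other candidates gives a success probability of $(1-p(x))^{m-1}$. Integrating over $x\in[0,1]^n$, the mass comes from the region where $p(x)\lesssim 1/m$, i.e.\ where at least $n-1$ coordinates satisfy $1-x_v$ small. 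I would bound $\int (1-p)^{m-1}\,dx$ by $\widetilde O_n(m^{-1-c})$ for some $c>0$, using the substitution $y_v=1-x_v$ together with $\prod x_v\approx e^{-\sum y_v}$. A union bound over $m$ candidates then gives $o(1)$. This is the step I expect to be the main obstacle. The integral is dominated by configurations where one coordinate is moderate and the rest are near $1$, so the computation must show that requiring the second-order term to be small costs an extra polynomial factor. If the direct estimate is too crude, I would fall back on a second-moment or Poisson-clumping argument.

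\textbf{Negative side ($\alpha>\frac{k-1}{k}$).} Suppose $S$ is $\alpha$-winning with $|S|=k$, and take $n$ large enough that $\alpha n>\frac{k-1}{k}n+k$. For $c\in S$, let $V_c$ be the voters whose top element of $S$ is $c$; these sets partition $[n]$, so some $c$ has $|V_c|\ge n/k$. Every outsider $a$ is beaten by $S$ in at least $\alpha n$ voters, so $a$ beats $S$ in at most $(1-\alpha)n<n/k-k$ voters. Hence $c$ beats $a$ in at least $|V_c|-(1-\alpha)n\ge 2$ voters. For competitors $b\in S\setminus\{c\}$, the same property is not automatic. I would handle this by choosing $c$ among the classes more carefully, or by applying \cref{lem:win2} to the profile restricted to $[m]\setminus(S\setminus\{c\})$. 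Since $S$ can be arbitrary, the cleanest route is a strengthened lemma: with high probability no $a$ beats, in at least two voters, every candidate outside some set of size $k-1$. The proof would be the same integral computation with the union bound over $a$ and over the $O(m^{k-1})$ exceptions replaced by a conditioning argument. The conclusion is that the $k-1$ exceptions cannot rescue $a$, because the bad event involves all but a constant number of candidates. So no $\alpha$-winning $S$ exists with high probability.

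\textbf{Positive side ($\alpha<\frac{k-1}{k}$).} Partition the voters into $k$ groups $G_1,\dots,G_k$ of sizes about $n/k$. For each $i$, pick $c_i$ to be a candidate that is ``best'' for group $G_i$, namely the one maximizing $\min_{v\in G_i}s_{v,c}$. With large $m$, this maximum is $1-\Theta_n(m^{-1/|G_i|})$ up to logs. Now consider any outsider $a$. For $a$ to beat $S$ in more than $(1-\alpha)n>n/k$ voters, $a$ must beat $c_i$ in at least two voters of some group, or in many voters spread across groups. I would show, via the order statistics above and a union bound over $a$, that with high probability no outsider beats $c_i$ in even two voters of $G_i$ for every $i$. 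More precisely, I would bound the number of voters in $a\succ S$ by a quantity like $k$ plus a small slack. This gives $|S\succ a|\ge n-k-o(n)\ge\alpha n$ for large $n$. The exact constants need adjusting, since $\frac{k-1}{k}$ means an outsider may win up to one group's worth of voters but no more. I would tune the groups so that beating all $k$ groups partially is ruled out by a \cref{lem:win2}-type estimate applied to each group, and then choose $N$ accordingly.
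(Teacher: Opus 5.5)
Your negative side and your outline of \cref{lem:win2} point the right way, but the positive side has a genuine gap. The claims it rests on are false, and the mechanism that produces the threshold $\frac{k-1}{k}$ is missing.

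\textbf{Why the positive side fails.} Write $g=\lceil n/k\rceil$. Your max-min choice gives $1-s_{v,c_i}\lesssim m^{-1/g}$ on $G_i$ (up to logs). So a fixed outsider beats $c_i$ on two given voters of $G_i$ with probability of order $m^{-2/g}$, and the expected number of such outsiders is of order $m^{1-2/g}\to\infty$. Hence ``no outsider beats $c_i$ in even two voters of $G_i$'' fails. Likewise, $|a\succ_{\mathcal P}S|\le k+{}$slack cannot hold. Your own negative side, run with $\alpha$ slightly above $\frac{k-1}{k}$, shows that w.h.p.\ \emph{every} $k$-set, yours included, has an outsider beating it on more than $(\frac1k-\delta)n$ voters. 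Your proposed repair also targets the wrong event. For large $n$, outsiders beating $c_i$ on one voter in each group exist w.h.p.\ (expected number of order $m^{1-k/g}$), and \cref{lem:win2} plays no role on this side.

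\textbf{The missing idea.} Each voter an outsider wins costs a factor of about $m^{-1/g}$, and these factors multiply no matter how the wins are spread over the groups. Concretely:
\begin{itemize}
\item Put $y=(2\ln m/m)^{1/g}$. For each $i$, with probability at least $1-(1-y^g)^m\ge 1-m^{-2}$, some candidate has all its $G_i$-scores at least $1-y$. So your $c_i$ satisfies $s_{v,c_i}\ge 1-y$ for all $v\in G_i$.
\item An outsider $a$ with $a\succ_{R_v}S$ for $v\in G_i$ must beat $c_i$ there, so $s_{v,a}>1-y$.
\item Hence, on this event, if $S$ is not $\alpha$-winning, some candidate has at least $Q=\lceil(1-\alpha)n\rceil$ scores above the \emph{fixed} threshold $1-y$. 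This has probability at most $m\binom{n}{Q}y^Q=\widetilde O_n\bigl(m^{1-Q/g}\bigr)$.
\item Since $Q/g\to(1-\alpha)k$, which exceeds $1$ exactly when $\alpha<\frac{k-1}{k}$, this is $o(1)$ for large $n$.
\end{itemize}
The paper performs the same count with $c_i$ maximizing the score sum over $V_i$. It bounds $\Pr[\mathcal{A}_{i,q}]\le\Pr[\mathcal{A}_{i,1}]^q$ with $\Pr[\mathcal{A}_{i,1}]=\widetilde O_n(m^{-1/g})$, multiplies over groups, and sums over $(q_1,\dots,q_k)$. The fixed-threshold version above avoids the dependence between $a$ and $S$.

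\textbf{Two smaller points.}
\begin{itemize}
\item On the negative side, your worry about $b\in S\setminus\{c\}$ is unfounded. Since $c$ is the top of $S$ for every voter in $V_c$, we get $|c\succ_{\mathcal P}b|\ge|V_c|\ge n/k\ge 2$. So \cref{lem:win2} applies to $c$ directly, which is exactly the paper's argument. The strengthened lemma with $k-1$ exceptions is true but unnecessary.
\item In your sketch of \cref{lem:win2}, $x_v$ and $1-x_v$ are swapped in $p(x)$, since $a$ beats $b$ at $v$ with probability $x_v$. With that corrected, the estimate closes via $p(x)\ge\prod_v(1-x_v)/\min_v(1-x_v)\ge\bigl(\prod_v(1-x_v)\bigr)^{(n-1)/n}$ and a Gamma tail. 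This gives $\widetilde O_n(m^{-n/(n-1)})$, so the extra polynomial factor you were looking for is $m^{-1/(n-1)}$.
\end{itemize}
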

To prove the theorem, we first establish the following crucial lemma, which states that, with high probability, there is no candidate that wins at least $2$ voters against every other candidate when $m$ is much greater than $n$.

\begin{lemma}
\label{lem:win2}
For all $n\in \mathbb{N}^{+}$, it holds that
\[
\lim\limits_{m \to \infty}
\Pr\limits_{\mathcal{P} \sim \mathrm{IC}(n,m)} \left[
\exists a \in [m] \text{ s.t.\@ }
\forall b \in [m] \setminus \{a\},\;
|a \succ_{\mathcal{P}} b| \ge 2
\right]
= 0.
\]
\end{lemma}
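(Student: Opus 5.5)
We would prove \cref{lem:win2} by a first-moment (union bound) argument in the score model. By symmetry it suffices to show that for a fixed candidate $a$ the probability $p_m$ of the event
\[
E_a \coloneqq \bigl\{\forall b \neq a:\ |a\succ_{\mathcal{P}} b|\ge 2\bigr\}
\]
satisfies $m\,p_m \to 0$ as $m\to\infty$. The union bound over the $m$ choices of $a$ then gives the claim. For $n=1$ the event is impossible, so we assume $n\ge 2$.

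\textbf{Computing $p_m$.} We condition on the score vector $x=(x_1,\dots,x_n)$ of $a$, where $x_v = s_{v,a}$. Given $x$, the other $m-1$ candidates have i.i.d.\ score vectors $y\in[0,1]^n$. A single such candidate $b$ violates the condition exactly when $y_v>x_v$ for at least $n-1$ voters $v$. The probability of that is
\[
q(x) = \prod_v (1-x_v) + \sum_{u}x_u\prod_{v\neq u}(1-x_v).
\]
Writing $t_v = 1-x_v$, we get
\[
p_m = \int_{[0,1]^n} \bigl(1-q(x)\bigr)^{m-1}\,\d x \le \int \exp\bigl(-(m-1)\,q\bigr)\,\d t .
\]
The key inequality is a lower bound of the form
\[
q \ge \sum_u (1-t_u)\prod_{v\ne u}t_v + \prod_v t_v \ge \max_u \prod_{v\neq u} t_v .
\]
To justify it, split according to whether the index $u$ attaining the maximum has $t_u\ge 1/2$ or not; one of the two terms is then at least half of $\prod_{v\ne u}t_v$. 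So $q\ge \tfrac12\max_u\prod_{v\ne u}t_v$, and since the maximum is at least the geometric mean, $q \ge \tfrac12 \bigl(\prod_v t_v\bigr)^{(n-1)/n}$.

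\textbf{Bounding the integral.} Set $P=\prod_v t_v$. For $t$ uniform on $[0,1]^n$, $-\ln P$ is a Gamma$(n,1)$ variable, so $P$ has density $\frac{(-\ln p)^{n-1}}{(n-1)!}$ on $(0,1)$. Therefore
\[
p_m \le \int_0^1 \exp\bigl(-\tfrac{m-1}{2}p^{(n-1)/n}\bigr)\frac{(\ln(1/p))^{n-1}}{(n-1)!}\,\d p .
\]
The integrand is negligible unless $p \lesssim m^{-n/(n-1)}$ up to logarithmic factors. Substituting $p = (r/m)^{n/(n-1)}$ gives
\[
p_m = \widetilde{O}_n\bigl(m^{-n/(n-1)}\bigr),
\]
with the logarithmic factor contributing only $(\ln m)^{n-1}$. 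Since $n/(n-1)>1$, this yields $m\,p_m = \widetilde{O}_n\bigl(m^{-1/(n-1)}\bigr)\to 0$, which completes the plan.

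\textbf{Main obstacle.} The delicate step is the lower bound on $q$. The danger is the region where a few $t_v$ are tiny but others are not, so that $P$ is small while $q$ stays large; this region is harmless. We must make sure the exponent $(n-1)/n<1$ is what drives the rate. The bound via $\max_u \prod_{v\ne u} t_v$ (drop the worst voter) is exactly right here: it captures that a single voter ranking $a$ low does not hurt, because one losing voter is allowed. If the geometric-mean shortcut turns out too lossy, we would instead integrate $\exp(-c\,m\max_u\prod_{v\ne u}t_v)$ directly by ordering the $t_v$. Conditioning on the smallest coordinate $t_{(1)}$, the remaining $n-1$ coordinates must have product at most about $1/m$, which has probability $\widetilde{O}(1/m)$. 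The event $E_a$ also needs at least one further voter, beyond the smallest, to have small $t$ (otherwise a competitor beating $a$ for all but one voter is likely). This forces a total of $o(1/m)$, specifically $\widetilde O(m^{-n/(n-1)})$ or at least $\widetilde O(m^{-1-\delta})$, which suffices.
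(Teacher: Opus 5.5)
Your proposal is correct and follows essentially the same route as the paper. Both arguments take a union bound over $a$, condition on $a$'s score column, lower-bound the per-competitor failure probability by $\max_u\prod_{v\ne u}(1-s_v)\ge\bigl(\prod_v(1-s_v)\bigr)^{(n-1)/n}$, and use that $-\ln\prod_v(1-s_v)$ is Gamma$(n,1)$ to obtain $\widetilde{O}_n(m^{-n/(n-1)})$. The differences are cosmetic: you integrate against the density of $\prod_v t_v$ directly rather than splitting at the threshold $2\ln m/(m-1)$, and your factor $\tfrac12$ is unnecessary since $(1-t_u)\prod_{v\ne u}t_v+\prod_v t_v=\prod_{v\ne u}t_v$ exactly.
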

\begin{proof}
We first apply the union bound on the probability that such a candidate exists:
\begin{align*}
&\ \Pr_{\mathcal{P} \sim \mathrm{IC}(n,m)}\!\left[
\exists a\in [m] \text{ s.t.\@ }
\forall b\in [m]\setminus\{a\},\; |a \succ_{\mathcal{P}} b|\ge 2
\right]\\
\le &\sum_{a\in [m]}
\Pr_{\mathcal{P}\sim \mathrm{IC}(n,m)}\!\left[
\forall b\in [m]\setminus\{a\},\; |a \succ_{\mathcal{P}} b|\ge 2
\right].
\end{align*}

By symmetry (each candidate has the same distribution under IC), the sum on the right-hand side is equal to $m$ times the probability for a fixed candidate. Hence, for any fixed $a_0\in [m]$, we have
\begin{align*}
&\ \Pr_{\mathcal{P} \sim \mathrm{IC}(n,m)}\!\left[
\exists a\in [m] \text{ s.t.\@ }
\forall b\in [m]\setminus\{a\},\; |a \succ_{\mathcal{P}} b|\ge 2
\right]\\
\leq&\ m \cdot 
\Pr_{\mathcal{P} \sim \mathrm{IC}(n,m)}\!\left[
\forall b\in [m]\setminus\{a_0\},\; |a_0 \succ_{\mathcal{P}} b|\ge 2
\right].
\end{align*}

We now analyze the probability that the candidate $a_0$ pairwise wins at least $2$ voters against every other candidate. 
In order to prove \cref{lem:win2}, it suffices to show the following:
\[
\Pr_{\mathcal{P} \sim \mathrm{IC}(n,m)}\!\left[
\forall b\in [m]\setminus\{a_0\},\; |a_0 \succ_{\mathcal{P}} b|\ge 2
\right]=o_n(m^{-1}).
\]

We can reduce the probability calculation on the preference profile $\mathcal{P}$ to an equivalent calculation on the score matrix $\mathbf{S}$.
For simplicity, let us write $\mathbf{s} = (s_{v,a_0})_{v \in [n]}$ and $s_v=s_{v,a_0}$. 
Let $\Pr[\cdot\mid\mathbf{s}]$ denote the conditional probability taken over the randomness of the other columns $\{(s_{v,b})_{v\in [n]}:b\neq a\}$.
Then, we have
\begin{align*}
&\ \Pr_{\mathcal{P} \sim \mathrm{IC}(n,m)}\!\left[
\forall b\in [m]\setminus\{a_0\},\; |a_0 \succ_{\mathcal{P}} b|\ge 2
\right]\\
=&\ \Pr_{\mathbf{S}\sim\mathrm{Unif}[0,1]^{n\times m}}\!\left[
\forall b\in [m]\setminus\{a_0\},\; |a_0 \succ_{\mathcal{P}} b|\ge 2
\right]\\
=&\ \int_{[0,1]^n}
\Pr\left[\forall b\in [m]\setminus\{a_0\},\; |a_0\succ_{\mathcal{P}} b|\ge 2
\mid \mathbf{s}\right]
\,\d\mathbf{s}.
\end{align*}
For all $b\neq a_0$, the columns $(s_{v,b})_{v\in [n]}$ have independent and identical distributions. Therefore, for any fixed $b_0\in[m]\setminus\{a_0\}$, we have
\begin{align*}
&\ \Pr\left[\forall b\in [m]\setminus\{a_0\},\; |a_0\succ_{\mathcal{P}} b|\ge 2
\mid \mathbf{s}\right]\\
=&\ \prod_{b\in [m]\setminus\{a_0\}}
\Pr\bigl[\,|a_0\succ_{\mathcal{P}} b|\ge 2\mid \mathbf{s}\bigr]\\
=&\ \left(\Pr\left[
|a_0 \succ_{\mathcal{P}} b_0|\ge 2
\mid\mathbf{s}\right]\right)^{m-1}.
\end{align*}

Fix $a_0$'s scores $\mathbf{s} = (s_{v,a_0})_{v \in [n]}$.
The probability that $a_0$ wins no voter against $b_0$ is 
\[
\prod_{v\in [n]}(1-s_v),
\]
and the probability that $a_0$ wins exactly $1$ voter against $b_0$ is 
\[
\sum_{v\in [n]}s_v\prod_{v^\prime\in [n]\setminus\{v\}}(1-s_{v^\prime}).
\]
Hence, the probability that $a_0$ wins at least $2$ voters against $b_0$ is
\begin{align*}
&\ \int_{[0,1]^n}\left(\Pr\left[
|a_0 \succ_{\mathcal{P}} b_0| \ge 2\mid\mathbf{s}\right]\right)^{m-1}\d\mathbf{s}\\
=&\ \int_{[0,1]^n}\left(
1-\prod_{v\in [n]}(1-s_v)-\sum_{v\in [n]}s_v\prod_{v^\prime\in [n]\setminus\{v\}}(1-s_{v^\prime})
\right)^{m-1}\d\mathbf{s}.
\end{align*}
We rewrite
\[
1-\prod_{v\in [n]}(1-s_v)-\sum_{v\in [n]}s_v\prod_{v^\prime\in [n]\setminus\{v\}}(1-s_{v^\prime})
=1-\left(1+\sum_{v\in [n]}\frac{s_v}{1-s_v}\right)\cdot\prod_{v\in [n]}(1-s_v).
\]
Since
\[
1+\sum_{v\in [n]}\frac{s_v}{1-s_v}\ge 1+\max_{v\in [n]}\frac{s_v}{1-s_v}=\frac{1}{\min_{v\in [n]}(1-s_v)},
\]
we obtain the upper bound
\[
\int_{[0,1]^n}\left(\Pr\left[
|a_0 \succ_{\mathcal{P}} b_0| \ge 2\mid\mathbf{s}\right]\right)^{m-1}\d\mathbf{s}
\le \int_{[0,1]^n}\left(1-\frac{\prod_{v\in [n]}(1-s_v)}{\min_{v\in [n]}(1-s_v)}\right)^{m-1}\d\mathbf{s}.
\]
Since
\[
\min_{v\in [n]}(1-s_v)\le\left(\prod_{v\in [n]}(1-s_v)\right)^{1/n},
\]
dividing $\prod_{v\in [n]}(1-s_v)$ by both sides yields
\[
\frac{\prod_{v\in [n]}(1-s_v)}{\min_{v\in [n]}(1-s_v)}
\ge\left(\prod_{v\in [n]}(1-s_v)\right)^{(n-1)/n}.
\]

For simplicity, define \[
B(\mathbf{s}) \coloneqq \left(\prod_{v\in V}(1-s_v)\right)^{(n-1)/n},
\]
and let
\[
D \coloneqq \left\{\mathbf{s}\in[0,1]^n\mid B(\mathbf{s})\le\frac{2\ln m}{m-1}\right\}.
\]
Then we have
\begin{align}
&\ \int_{[0,1]^n}\left(1-\frac{\prod_{v\in [n]}(1-s_v)}{\min_{v\in [n]}(1-s_v)}\right)^{m-1}\d\mathbf{s} \nonumber\\
\le&\ \int_{[0,1]^n}\left(1-B(\mathbf{s})\right)^{m-1}\d\mathbf{s} \nonumber\\
=&\ \int_{D}\left(1-B(\mathbf{s})\right)^{m-1}\d\mathbf{s}
+\int_{[0,1]^n\setminus D}\left(1-B(\mathbf{s})\right)^{m-1}\d\mathbf{s}. \label{eq:two_terms}
\end{align}

For any $\mathbf{s}\in [0,1]^n\setminus D$, using the fact that $1 + x \leq e^x$ for all $x \in \mathbb{R}$, we have
\[
\left(1-B(\mathbf{s})\right)^{m-1}\le \left(1-\frac{2\ln m}{m-1}\right)^{m-1}
\le\left(\exp\left(-\frac{2\ln m}{m-1}\right)\right)^{m-1}=m^{-2}.
\]
Hence, we can bound the second term in \cref{eq:two_terms} as follows:
\[
\int_{[0,1]^n\setminus D}\left(1-B(\mathbf{s})\right)^{m-1}\d\mathbf{s}
\le \text{Vol}([0,1]^n\setminus D)\cdot m^{-2}
\le m^{-2}.
\]

For the first term in \cref{eq:two_terms}, we have
\[
\int_{D}\left(1-B(\mathbf{s})\right)^{m-1}\d\mathbf{s}
\le \text{Vol}(D)\cdot 1=\text{Vol}(D)=\Pr_{\mathbf{s}\sim\mathrm{Unif}[0,1]^n}\left[B(\mathbf{s})\le\frac{2\ln m}{m-1}\right].
\]
We will then bound $\Pr_{\mathbf{s}\sim\mathrm{Unif}[0,1]^n}\left[B(\mathbf{s})\le\frac{2\ln m}{m-1}\right]$. We first have
\begin{align*}
&\ \Pr_{\mathbf{s}\sim\mathrm{Unif}[0,1]^n}\left[B(\mathbf{s})\le\frac{2\ln m}{m-1}\right]\\
=&\ \Pr_{\mathbf{s}\sim\mathrm{Unif}[0,1]^n}\left[\prod_{v\in [n]}(1-s_v)\le\left(\frac{2\ln m}{m-1}\right)^{n/(n-1)}\right]\\
=&\ \Pr_{\mathbf{s}\sim\mathrm{Unif}[0,1]^n}\left[\sum_{v\in [n]}-\ln(1-s_v)\ge-\frac{n}{n-1}\ln\left(\frac{2\ln m}{m-1}\right)\right].
\end{align*}

We calculate the cumulative distribution function (CDF) of $x_v=-\ln(1-s_v)$ as follows. For any $z \in [0, +\infty)$, we have
\[
\Pr_{s_v\sim\mathrm{Unif}[0,1]}[x_v > z]=\Pr_{s_v\sim\mathrm{Unif}[0,1]}[-\ln(1-s_v) > z]=\Pr_{s_v\sim\mathrm{Unif}[0,1]}[1-s_v < e^{-z}]=e^{-z}.
\]
Therefore, the CDF of $x_v$ is $1-e^{-z}$, which means $x_v$ follows an exponential distribution, and thus $\sum_{v\in [n]}x_v$ follows a Gamma distribution: for all $z \geq 0$, we have
\[
\Pr\left[\sum_{v\in[n]}x_v < z\right]=1-e^{-z}\sum_{t=0}^{n-1}\frac{z^t}{t!}.
\]
We therefore obtain
\begin{align*}
&\ \Pr_{\mathbf{s}\sim\mathrm{Unif}[0,1]^n}\left[B(\mathbf{s})\le\frac{2\ln m}{m-1}\right]\\
= &\ \Pr
\left[\sum_{v\in [n]}x_v\ge-\frac{n}{n-1}\ln\left(\frac{2\ln m}{m-1}\right)\right]\\
= &\ \left(\frac{2\ln m}{m-1}\right)^{\frac{n}{n-1}}\sum_{t=0}^{n-1}\frac{\left(-\frac{n}{n-1}\ln\left(\frac{2\ln m}{m-1}\right)\right)^t}{t!}\\
= &\ O_n\left(\left(\frac{2\ln m}{m-1}\right)^{\frac{n}{n-1}}\frac{\left(-\frac{n}{n-1}\ln\left(\frac{2\ln m}{m-1}\right)\right)^{n-1}}{(n-1)!}\right)\\
= &\ \widetilde{O}_n\left(m^{-\frac{n}{n-1}}\right).
\end{align*}

Finally, we have
\begin{align*}
&\ \Pr_{\mathcal{P} \sim \mathrm{IC}(n,m)}\!\left[
\forall b\in [m]\setminus\{a_0\},\; |a_0 \succ_{\mathcal{P}} b|\ge 2
\right]\\
\le&\ \int_{D}\left(1-B(\mathbf{s})\right)^{m-1}\d\mathbf{s}
+\int_{[0,1]^n\setminus D}\left(1-B(\mathbf{s})\right)^{m-1}\d\mathbf{s}\\
\le&\ \widetilde{O}_n\left(m^{-\frac{n}{n-1}}\right)+m^{-2}\\
=&\ o_n\left(m^{-1}\right).\qedhere
\end{align*}
\end{proof}

With \cref{lem:win2}, we can get the threshold of $\alpha$ when $k=1$.

\begin{corollary}
\label{cor:alpha-winning1}
For any constants $\alpha\in(0,1]$ and $n > \frac{1}{\alpha}$, it holds that
\[
\lim_{m\to\infty}f_{\alpha,1}(n,m)=0,
\]
\end{corollary}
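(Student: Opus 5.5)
For $k=1$, a committee is a single candidate $\{a\}$, and the condition $S \succ_{R_v} b$ reduces to $a \succ_{R_v} b$. Hence $\{a\}$ is $\alpha$-winning exactly when $|a \succ_{\mathcal{P}} b| \ge \alpha n$ for every $b \in [m]\setminus\{a\}$. The plan is to reduce the statement directly to \cref{lem:win2} by an event inclusion.

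The hypothesis $n > \frac{1}{\alpha}$ gives $\alpha n > 1$. Since $|a \succ_{\mathcal{P}} b|$ is an integer, the inequality $|a \succ_{\mathcal{P}} b| \ge \alpha n > 1$ forces $|a \succ_{\mathcal{P}} b| \ge 2$. Therefore, for every profile $\mathcal{P}$,
\[
\{\exists\, S,\ |S|=1,\ S \text{ is } \alpha\text{-winning}\} \subseteq \{\exists\, a \in [m] \text{ s.t.\@ } \forall b \in [m]\setminus\{a\},\ |a \succ_{\mathcal{P}} b| \ge 2\}.
\]

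Taking probabilities under $\mathrm{IC}(n,m)$ yields $0 \le f_{\alpha,1}(n,m) \le \Pr[\exists a \,\forall b \neq a: |a\succ_{\mathcal{P}} b|\ge 2]$. By \cref{lem:win2}, the right-hand side tends to $0$ as $m \to \infty$ for the fixed $n$, so $\lim_{m\to\infty} f_{\alpha,1}(n,m) = 0$.

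There is no real obstacle here. The only point needing care is the integrality step: it uses the strict inequality $\alpha n > 1$, which is exactly why the hypothesis $n > 1/\alpha$ is imposed. If we had $\alpha n = 1$ instead, winning a single voter would suffice and \cref{lem:win2} would not apply.
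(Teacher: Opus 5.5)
Your proposal is correct and follows the paper's proof: both rewrite $f_{\alpha,1}(n,m)$ as the probability that some single candidate $a$ satisfies $|a \succ_{\mathcal{P}} b| \ge \alpha n$ for all $b \neq a$, use $\alpha n > 1$ to bound this by the event in \cref{lem:win2}, and take the limit. You spell out the integrality step ($|a \succ_{\mathcal{P}} b| \ge \alpha n > 1$ forces $|a \succ_{\mathcal{P}} b| \ge 2$), which the paper uses without stating.
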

\begin{proof}[Proof]
According to the definition of $f_{\alpha,1}(n,m)$, we have
\begin{align*}
&\ \lim_{m\to\infty}f_{\alpha,1}(n,m)\\
=& \lim_{m\to\infty}
\Pr_{\mathcal{P} \sim \mathrm{IC}(n,m)}\left[
  \exists S \subseteq [m]\text{ s.t. } |S| = 1 \text{ and } 
  S \text{ is } \alpha\text{-winning in } \mathcal{P}
\right]\\
=& \lim_{m\to\infty}
\Pr_{\mathcal{P} \sim \mathrm{IC}(n,m)}\left[
  \exists a \subseteq [m]\text{ s.t. } 
  \forall b \in [m] \setminus \{a\},\;
|a \succ_{\mathcal{P}} b| \ge \alpha n
\right].
\end{align*}
For any constants $\alpha \in (0, 1]$ and $n$ with $\alpha n > 1$, by \cref{lem:win2}, we have
\begin{align*}
& \lim_{m\to\infty}
\Pr_{\mathcal{P} \sim \mathrm{IC}(n,m)}\left[
  \exists a \subseteq [m]\text{ s.t. } 
  \forall b \in [m] \setminus \{a\},\;
|a \succ_{\mathcal{P}} b| \ge \alpha n
\right]\\
\le&\ \lim_{m\to\infty}
\Pr_{\mathcal{P} \sim \mathrm{IC}(n,m)}\left[
  \exists a \subseteq [m]\text{ s.t. } 
  \forall b \in [m] \setminus \{a\},\;
|a \succ_{\mathcal{P}} b| \ge 2
\right]\\
=&\ 0.\qedhere
\end{align*}
\end{proof}

This corollary shows that, for any $\alpha \in (0,1]$, with high probability, there is no $\alpha$-winning candidate (a committee of size $1$) when $n$ is large enough and $m$ is much greater than $n$.

Now that everything is ready, we return to the proof of the theorem. We have already handled the scenario of $k=1$, so it remains to consider the scenario of $k\ge 2$. The argument naturally splits into two cases depending on the value of $\alpha$.

\paragraph{Case 1 (positive side): $\alpha \in \left[0,\frac{k-1}{k}\right)$.}
In this case, we show that for a sufficiently large $n$,
\[
\lim_{m\to\infty} f_{\alpha,k}(n,m)=1.
\]
We partition the voters into voter groups $V_1, V_2, \ldots, V_k$, where
\[
V_i \coloneqq \left\{\left\lceil\frac{i-1}{k}n\right\rceil+1,\dots,\left\lceil\frac{i}{k}n\right\rceil\right\}
\]
For each $i \in [k]$, we define the ``best'' candidate in $V_i$ as
\[
c_i \coloneqq \arg\max_a \sum_{v\in V_i} s_{a,v}.
\]
We select a committee $S=\{c_i\mid i\in[k]\}$. We will prove that $S$ is probably $\alpha$-winning for $\alpha\in\left[0,\frac{k-1}{k}\right)$. We have
\begin{align*}
&\ f_{\alpha,k}(n,m)\\
\ge &\ \Pr_{\mathcal{P} \sim \mathrm{IC}(n,m)}\left[ 
  S \text{ is } \alpha\text{-winning in } \mathcal{P}\right]\\
=&\ \Pr_{\mathcal{P} \sim \mathrm{IC}(n,m)}\left[ 
  \forall a\in [m]\setminus S,\ |S\succ_{\mathcal{P}} a|\ge\alpha n\right]\\
=&\ 1-\Pr_{\mathcal{P} \sim \mathrm{IC}(n,m)}\left[ 
  \exists a\in [m]\setminus S,\ |a\succ_{\mathcal{P}} S|\ge(1-\alpha) n\right].
\end{align*}
By the union bound and symmetry, we further have
\begin{align*}
&\ f_{\alpha,k}(n,m)\\
\ge &\ 1 - \sum_{a\in [m] \setminus S}\Pr_{\mathcal{P} \sim \mathrm{IC}(n,m)}\left[ 
  |a\succ_{\mathcal{P}} S|\ge(1-\alpha) n\right]\\
\geq &\ 1 - m \cdot\Pr_{\mathcal{P} \sim \mathrm{IC}(n,m)}\left[ 
  |a_0\succ_{\mathcal{P}} S|\ge(1-\alpha) n\right],
\end{align*}
where $a_0$ is an arbitrary fixed candidate in $[m] \setminus S$.

Therefore, if we can obtain \[
\Pr_{\mathcal{P} \sim \mathrm{IC}(n,m)}\left[ 
  |a_0\succ_{\mathcal{P}} S|\ge(1-\alpha) n\right]=o_n\left(\frac{1}{m}\right)
\] when $n\to \infty$, we will finish the proof for $\alpha \in \left[0,\frac{k-1}{k}\right)$. Note that the candidate $a_0$ was not selected into $S$, and hence its score vector is first-order stochastically dominated by that of a fresh candidate. In the subsequent proof of the bound above, we relax the score vector of $a_0$ to that of a fresh candidate.

For each $i\in[k]$, define a sub-profile $\mathcal{P}_i\coloneqq(R_v)_{v\in V_i}$, and similarly define
\[
a_0\succ_{\mathcal{P}_i} S \coloneqq \{v\in V_i:a_0\succ_{R_v} S\}.
\]

We now bound the probability that $a_0$ wins at least $\lceil(1-\alpha)n\rceil$ voters against $S$. By the union bound and independence across voter groups, we have
\begin{align*}
&\ \Pr_{\mathcal{P} \sim \mathrm{IC}(n,m)}\left[ 
  |a_0\succ_{\mathcal{P}} S|\ge(1-\alpha) n\right]\\
\le&\ \sum_{\substack{q_1+\dots+q_k=\lceil(1-\alpha)n\rceil \\ q_1,\dots,q_k\in\{0,\dots,\lceil n/k\rceil\}}}\prod_{i=1}^k\Pr_{\mathcal{P} \sim \mathrm{IC}(n,m)}[|a_0\succ_{\mathcal{P}_i}S|\ge q_i].
\end{align*}
Define the event $\mathcal{A}_{i,q}\coloneqq\{|a_0\succ_{\mathcal{P}_i}S|\ge q\}$ and we write $\Pr_{\mathcal{P} \sim \mathrm{IC}(n,m)}[\mathcal{A}_{i,q}]$ as $\Pr[\mathcal{A}_{i,q}]$. Then we will bound $\Pr[\mathcal{A}_{i,q}]$.
\begin{align*}
&\ \Pr[\mathcal{A}_{i,q}]\\
=&\ \Pr[\mathcal{A}_{i,q}\mid\mathcal{A}_{i,q-1}]\cdot\Pr[\mathcal{A}_{i,q-1}]+\Pr[\mathcal{A}_{i,q}\mid\neg\mathcal{A}_{i,q-1}]\cdot\Pr[\neg\mathcal{A}_{i,q-1}]\\
=&\ \Pr[\mathcal{A}_{i,q}\mid\mathcal{A}_{i,q-1}]\cdot\Pr[\mathcal{A}_{i,q-1}]\\
=&\ \Pr[\mathcal{A}_{i,q}\mid\mathcal{A}_{i,q-1}]\cdot\Pr[\mathcal{A}_{i,q-1}\mid\mathcal{A}_{i,q-2}]\cdots \Pr[\mathcal{A}_{i,1}\mid\mathcal{A}_{i,0}]\\
=&\ \prod_{q'=1}^{q}\Pr[\mathcal{A}_{i,q'}\mid\mathcal{A}_{i,q'-1}].
\end{align*}

Then we will use the following lemma to provide a simple bound for $\prod_{q'=1}^{q}\Pr[\mathcal{A}_{i,q'}\mid\mathcal{A}_{i,q'-1}]$.
\begin{lemma}
\label{lem:A0}
For all $t\in\{1,\ldots,q\}$, we have
\[
\Pr_{\mathcal{P} \sim \mathrm{IC}(n,m)}[\mathcal{A}_{i,t}\mid\mathcal{A}_{i,t-1}]\le
\Pr_{\mathcal{P} \sim \mathrm{IC}(n,m)}[\mathcal{A}_{i,1}].
\]
\end{lemma}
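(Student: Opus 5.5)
The plan is to write the conditional probability as a ratio and reduce the lemma to a product inequality:
\[
\Pr[\mathcal{A}_{i,t}\mid\mathcal{A}_{i,t-1}]=\frac{\Pr[\mathcal{A}_{i,t}]}{\Pr[\mathcal{A}_{i,t-1}]}\le\Pr[\mathcal{A}_{i,1}]
\quad\Longleftrightarrow\quad
\Pr[\mathcal{A}_{i,t}]\le\Pr[\mathcal{A}_{i,1}]\cdot\Pr[\mathcal{A}_{i,t-1}],
\]
where all probabilities are unconditional over $\mathcal{P}\sim\mathrm{IC}(n,m)$. As agreed above, $a_0$ is replaced by a fresh candidate whose scores are independent of everything else. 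For $v\in V_i$ write $M_v\coloneqq\max_{c\in S}s_{v,c}$ and $p_v\coloneqq 1-M_v$. Voter $v$ lies in $a_0\succ_{\mathcal{P}_i}S$ exactly when $s_{v,a_0}>M_v$. Hence, given $\mathbf{M}=(M_v)_{v\in V_i}$, these indicators are independent $\mathrm{Bernoulli}(p_v)$ variables. Let $h_q(\mathbf{M})$ denote the probability that at least $q$ of them equal $1$. Then $\Pr[\mathcal{A}_{i,q}]=\E[h_q(\mathbf{M})]$, and the target inequality reads $\E[h_t]\le\E[h_1]\,\E[h_{t-1}]$.

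\textbf{Step 1 (pointwise).} For independent Bernoullis, the event ``at least $t$ successes'' is the disjoint occurrence of ``at least one success'' and ``at least $t-1$ successes'': choose one successful voter as witness for the first event and $t-1$ other successful voters for the second. Since these are increasing events on a product space, the van den Berg--Kesten inequality gives $h_t(\mathbf{M})\le h_1(\mathbf{M})\,h_{t-1}(\mathbf{M})$ for every $\mathbf{M}$.

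\textbf{Step 2 (averaging).} It remains to show $\E[h_1h_{t-1}]\le\E[h_1]\,\E[h_{t-1}]$. Both $h_1$ and $h_{t-1}$ are coordinatewise decreasing in $\mathbf{M}$, so this is a \emph{negative}-correlation statement about the law of $\mathbf{M}$. I will analyze that law via the construction of $S$. For $j\ne i$, the candidate $c_j$ is selected using only the scores of voters in $V_j$, so its scores on $V_i$ are fresh i.i.d.\ uniforms. (This holds off the event that some $c_j$ coincides with $c_i$; that event is either handled separately or absorbed as an $o_n(1)$ error.) Only $c_i$ is biased: its score vector on $V_i$ is the argmax of the coordinate sum among $m$ i.i.d.\ uniform vectors. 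Conditional on the value $\sigma$ of that sum, the vector is uniform on the slice $\{x\in[0,1]^{V_i}:\sum_v x_v=\sigma\}$. By the Joag-Dev--Proschan theorem (independent variables with log-concave, hence $\mathrm{PF}_2$, densities conditioned on their sum are negatively associated), this vector is negatively associated. Taking coordinatewise maxima with the independent fresh coordinates of the other $c_j$ is a family of coordinatewise monotone operations on disjoint blocks, so $\mathbf{M}$ is negatively associated given $\sigma$.

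\textbf{Main obstacle.} Negative association controls covariances only of functions of \emph{disjoint} coordinate sets, whereas $h_1$ and $h_{t-1}$ depend on all of $\mathbf{M}$. In addition, mixing over $\sigma$ introduces positive correlation, since both functions are larger when $\sigma$ is small. My first attempt will combine the two steps instead of separating them. I will apply the disjoint-witness decomposition from Step 1 directly to the joint product space of $(s_{v,a_0})_v$ together with the fresh coordinates. I will then expand $h_1(\mathbf{M})=1-\prod_v M_v$ and $h_{t-1}$ as sums over disjoint voter sets, so that each covariance term involves disjoint coordinates and negative association applies termwise. For the mixing over $\sigma$, I will try to show that, given $\sigma$, the laws are stochastically ordered in a way whose effect is dominated by the pointwise slack in Step 1. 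If this fails, I will fall back on proving the weaker inequality $\Pr[\mathcal{A}_{i,t}\mid\mathcal{A}_{i,t-1}]\le C_n\Pr[\mathcal{A}_{i,1}]$, with a constant $C_n$ depending only on $n$. That weaker bound is all that the subsequent $o_n(1/m)$ estimate actually consumes.
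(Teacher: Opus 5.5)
Your reduction to $\Pr[\mathcal{A}_{i,t}]\le\Pr[\mathcal{A}_{i,1}]\,\Pr[\mathcal{A}_{i,t-1}]$ and your Step~1 are fine. Step~2, however, is not merely hard to prove: it is false, so the two-step plan cannot be completed.

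\textbf{Why Step~2 fails.} Already for $t=2$ it reads $\E[h_1(\mathbf{M})^2]\le(\E[h_1(\mathbf{M})])^2$, i.e.\ $\mathrm{Var}(h_1(\mathbf{M}))\le 0$. This fails because $h_1(\mathbf{M})=1-\prod_{v\in V_i}M_v$ is not almost surely constant. The failure is structural. Suppose the $M_v$ were independent, which is the easy case: there the lemma follows from BK applied to the independent indicators $\mathbf{1}[s_{v,a_0}>M_v]$. Even then, $h_1$ and $h_{t-1}$ are both coordinatewise decreasing, so Harris's inequality makes them \emph{positively} correlated. Conditioning on $\mathbf{M}$ before applying BK discards the voter-by-voter randomness of $M_v$, which is exactly what the inequality uses. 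Anything true along your route would have to come from a quantitative use of the slack in Step~1, which you do not supply.

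\textbf{The alternatives do not repair this.} Your ``first attempt'' has two problems. First, $(s_{v,c_i})_{v\in V_i}$ is not a product measure, so BK is unavailable on that joint space. Second, expanding $h_1h_{t-1}$ produces terms that share coordinates, so negative association cannot be applied termwise. The $C_n$ fallback is a different, weaker statement, and it is itself left unproved.

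\textbf{The paper's route.} The paper never conditions on $\mathbf{M}$. It splits $\mathcal{A}_{i,t-1}$ according to the voter $j\in V_i$ at which the $(t-1)$-th win occurs (the events $\mathcal{B}^j_{i,t-1}$). On $\mathcal{B}^j_{i,t-1}$, the event $\mathcal{A}_{i,t}$ is just ``at least one win among the voters of $V_i$ after $j$''. The paper replaces the conditional probability of that event by the unconditional one, which is at most $\Pr[\mathcal{A}_{i,1}]$. This is the sequential form of your Step~1, applied directly to the unconditional win indicators.

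The paper justifies that replacement simply by ``independence'' across voters. You are right that the argmax couples $(s_{v,c_i})_{v\in V_i}$, so that step deserves more care than it gets. But the place to confront the coupling is this single conditional probability of a later win, not a global correlation inequality between $h_1(\mathbf{M})$ and $h_{t-1}(\mathbf{M})$.
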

\begin{proof}
Recall that $V_i=\left\{\left\lceil\frac{i-1}{k}n\right\rceil+1,\dots,\left\lceil\frac{i}{k}n\right\rceil\right\}$.
For any integer $j\in \left[\left\lceil\frac{i-1}{k}n\right\rceil+1,\left\lceil\frac{i}{k}n\right\rceil\right]$, we denote $\left\{\left\lceil\frac{i-1}{k}n\right\rceil+1,\dots,j\right\}$ as $V_i^j$.

Define the event $\mathcal{B}_{i,t}^j$ as
\[
\mathcal{B}_{i,t}^j\coloneqq \left(\left|\{v\in V_i^j:a_0\succ_{R_v} S\}\right|=t\right)
\land \left(\left|\{v\in V_i^{j-1}:a_0\succ_{R_v} S\}\right|=t-1\right).
\]
The event $\mathcal{B}_{i,t}^j$ occurs when, among the first $j$ voters in $V_i$, exactly $t$ voters rank $a_0$ above $S$, while among the first $j-1$ voters, exactly $t-1$ voters do so. Equivalently, the $j$-th voter is the one at which $a_0$ gets its $t$-th win against $S$. We have
\begin{align*}
&\ \Pr[\mathcal{A}_{i,t}\mid\mathcal{A}_{i,t-1}]\\
=&\ \sum_{j=\left\lceil\frac{i-1}{k}n\right\rceil+1}^{\left\lceil\frac{i}{k}n\right\rceil}
\Pr[\mathcal{B}_{i,t-1}^j
\land\mathcal{A}_{i,t}\mid\mathcal{A}_{i,t-1}]\tag{law of total probability}\\
=&\ \sum_{j=\left\lceil\frac{i-1}{k}n\right\rceil+1}^{\left\lceil\frac{i}{k}n\right\rceil}
\Pr[\mathcal{B}_{i,t-1}^j\mid\mathcal{A}_{i,t-1}]\cdot
\Pr[|a_0\succ_{V_i\setminus V_i^j} S|\ge 1\mid\mathcal{B}_{i,t-1}^j\land\mathcal{A}_{i,t-1}]\tag{definition of conditional probability}\\
=&\ \sum_{j=\left\lceil\frac{i-1}{k}n\right\rceil+1}^{\left\lceil\frac{i}{k}n\right\rceil}
\Pr[\mathcal{B}_{i,t-1}^j\mid\mathcal{A}_{i,t-1}]\cdot
\Pr[|a_0\succ_{V_i\setminus V_i^j} S|\ge 1]\tag{independence}\\
\le&\ \sum_{j=\left\lceil\frac{i-1}{k}n\right\rceil+1}^{\left\lceil\frac{i}{k}n\right\rceil}
\Pr[\mathcal{B}_{i,t-1}^j\mid\mathcal{A}_{i,t-1}]\cdot
\Pr[\mathcal{A}_{i,1}]\\
=&\ \Pr[\mathcal{A}_{i,1}]. \tag{law of total probability}
\end{align*}
This completes the proof of the lemma.
\end{proof}

By \cref{lem:A0}, we have
\[
\prod_{q'=1}^{q}\Pr[\mathcal{A}_{i,q'}\mid\mathcal{A}_{i,q'-1}]\le \Pr[\mathcal{A}_{i,1}]^q.
\]
Then we will bound $\Pr[\mathcal{A}_{i,1}]$.

Let $U_i\coloneqq \sum_{v\in V_i} s_{c_i,v} = \max_a \sum_{v\in V_i} s_{a,v}$. We have
\begin{align*}
&\ \Pr[\mathcal{A}_{i,1}]\\
\le&\ \Pr[\exists v\in V_i\text{ s.t.\@ }s_{c_i,v}<s_{a_0,v}]\\
\le&\ \sum_{v\in V_i}\Pr[s_{c_i,v}<s_{a_0,v}]\\
=&\ \sum_{v\in V_i}\left(1-\E[s_{c_i,v}]\right)\\
=&\ |V_i|-\E[U_i].
\end{align*}

Notice that $ \sum_{v\in V_i} s_{a,v}$ follows the Irwin--Hall distribution.
By the CDF of the Irwin--Hall distribution, for any $x\in[0,1]$, we have
\[
\Pr\left[\sum_{v\in V_i}s_{a,v}<|V_i|-x\right]=1-\frac{x^{|V_i|}}{|V_i|!}.
\]
Since all candidates are independent and $U_i = \max_a \sum_{v\in V_i} s_{a,v}$, we have
\[
\Pr[U_i<|V_i| - x]=\left(1-\frac{x^{|V_i|}}{|V_i|!}\right)^m.
\]

We choose the value for $x$ so that $\frac{x^{|V_i|}}{|V_i|!}=\frac{\ln m}{m}$; that is, $x=\left(|V_i|!\frac{\ln m}{m}\right)^{\frac{1}{|V_i|}}$. (For a sufficiently large $m$, we have $x \in [0, 1]$.) Then,
\begin{align*}
&\ |V_i|-\E[U_i]\\
\le&\ \Pr[U_i\ge |V_i|-x]\cdot x+\Pr[U_i<|V_i|-x]\cdot |V_i|\\
=&\ \left(1-\left(1-\frac{\ln m}{m}\right)^m\right)
\cdot\left(|V_i|!\cdot\frac{\ln m}{m}\right)^{\frac{1}{|V_i|}}
+\left(1-\frac{\ln m}{m}\right)^m\cdot|V_i|\\
\le&\ \left(1-\left(1-\frac{\ln m}{m}\right)^m\right)
\cdot\left(\left\lceil\frac{n}{k}\right\rceil!\cdot\frac{\ln m}{m}\right)^{\frac{1}{\left\lceil\frac{n}{k}\right\rceil}}
+\left(1-\frac{\ln m}{m}\right)^m\cdot\left\lceil\frac{n}{k}\right\rceil.
\end{align*}

Recall that our goal is to prove $\Pr_{\mathcal{P} \sim \mathrm{IC}(n,m)}\left[ 
  |a_0\succ_{\mathcal{P}} S|\ge(1-\alpha) n\right]=o_n\left(\frac{1}{m}\right)$. We have
\begin{align*}
&\Pr_{\mathcal{P} \sim \mathrm{IC}(n,m)}\left[ 
  |a_0\succ_{\mathcal{P}} S|\ge(1-\alpha) n\right]\\
\le &\ \sum_{\substack{q_1+\dots+q_k=\lceil(1-\alpha)n\rceil \\ q_1,\dots,q_k\in\{0,\dots,\lceil n/k\rceil\}}}\prod_{i=1}^k\Pr[\mathcal{A}_{i,q_i}]\\
\le&\ \sum_{\substack{q_1+\dots+q_k=\lceil(1-\alpha)n\rceil \\ q_1,\dots,q_k\in\{0,\dots,\lceil n/k\rceil\}}}\prod_{i=1}^k
\left(\left(1-\left(1-\frac{\ln m}{m}\right)^m\right)
\cdot\left(\left\lceil\frac{n}{k}\right\rceil!\cdot\frac{\ln m}{m}\right)^{\frac{1}{\left\lceil\frac{n}{k}\right\rceil}}
+\left(1-\frac{\ln m}{m}\right)^m\cdot\left\lceil\frac{n}{k}\right\rceil\right)^{q_i}.
\end{align*}
Notice that
\[
\left(1-\frac{\ln m}{m}\right)^m \le\left(\exp\left(-\frac{\ln m}{m}\right)\right)^m = \frac{1}{m}.
\]
Then we have (for a sufficiently large $m$)
\begin{align*}
&\ \left(1-\left(1-\frac{\ln m}{m}\right)^m\right)
\cdot\left(\left\lceil\frac{n}{k}\right\rceil!\cdot\frac{\ln m}{m}\right)^{\frac{1}{\left\lceil\frac{n}{k}\right\rceil}}
+\left(1-\frac{\ln m}{m}\right)^m\cdot\left\lceil\frac{n}{k}\right\rceil\\
\le&\ \left(\left\lceil\frac{n}{k}\right\rceil! \cdot \frac{\ln m}{m}\right)^{\frac{1}{\left\lceil\frac{n}{k}\right\rceil}} + \frac{1}{m}\cdot\left\lceil\frac{n}{k}\right\rceil\\
\le&\ \left(\frac{\ln m}{m}\right)^{\frac{1}{\left\lceil\frac{n}{k}\right\rceil}}\left(\left(\left\lceil\frac{n}{k}\right\rceil!\right)^{\frac{1}{\left\lceil\frac{n}{k}\right\rceil}}+1\right).
\end{align*}
Thus, we have
\begin{align*}
&\Pr_{\mathcal{P} \sim \mathrm{IC}(n,m)}\left[ 
  |a_0\succ_{\mathcal{P}} S|\ge(1-\alpha) n\right]\\
\le&\ \sum_{\substack{q_1+\dots+q_k=\lceil(1-\alpha)n\rceil \\ q_1,\dots,q_k\in\{0,\dots,\lceil n/k\rceil\}}}\prod_{i=1}^k
\left(\frac{\ln m}{m}\right)^{\frac{q_i}{\left\lceil\frac{n}{k}\right\rceil}}\left(\left(\left\lceil\frac{n}{k}\right\rceil!\right)^{\frac{1}{\left\lceil\frac{n}{k}\right\rceil}}+1\right)^{q_i}\\
\le&\ \left(\left\lceil\frac{n}{k}\right\rceil+1\right)^k\cdot
\left(\frac{\ln m}{m}\right)^{\frac{\lceil(1-\alpha)n\rceil}{\left\lceil\frac{n}{k}\right\rceil}}\cdot
\left(\left(\left\lceil\frac{n}{k}\right\rceil!\right)^{\frac{1}{\left\lceil\frac{n}{k}\right\rceil}}+1\right)^{\lceil(1-\alpha)n\rceil},
\end{align*}
where $\left(\lceil\frac{n}{k}\rceil+1\right)^k$ is an upper bound on the number of all possible $(q_1,\dots,q_k)$.

For any $\alpha\in\left[0,\frac{k-1}{k}\right)$, there is a constant $\varepsilon_\alpha > 0$, such that for all sufficiently large $n$, we have
\[
\frac{\lceil(1-\alpha)n\rceil}{\left\lceil\frac{n}{k}\right\rceil}>1 + \varepsilon_\alpha.
\]
Therefore, we obtain
\[
\Pr_{\mathcal{P} \sim \mathrm{IC}(n,m)}\left[ 
  |a_0\succ_{\mathcal{P}} S|\ge(1-\alpha) n\right]=o_n\left(\frac{1}{m}\right)
\]
as desired.

\paragraph{Case 2 (negative side): $\alpha \in \left(\frac{k-1}{k},1\right]$.}
Here we show that for a sufficiently large $n$,
\[
\lim_{m\to\infty} f_{\alpha,k}(n,m)=0.
\]

We will prove this by contradiction.

By the pigeonhole principle, for a committee $S=\{c_1,\dots,c_k\}$ of size $k$, there exists a candidate $c$ in $S$ such that
\[
\left|\left\{v\in [n]: s_{v,c} = \max_{i \in [k]} s_{v,c_i}\right\}\right|\ge \left\lceil\frac{n}{k}\right\rceil.
\]
Therefore, if $S$ is an $\alpha$-winning committee for $\alpha\in\left(\frac{k-1}{k},1\right]$, the candidate $c$ should win at least $\alpha n-\left\lfloor\frac{k-1}{k}n\right\rfloor$ voters against every other candidate. In other words, for each candidate $b\in [m]\setminus\{c\}$, it holds that $|c\succ_{\mathcal{P}} b|\ge \alpha n-\left\lfloor\frac{k-1}{k}n\right\rfloor$.

Notice that $\alpha n-\left\lfloor\frac{k-1}{k}n\right\rfloor > 1$ when $n$ is sufficiently large. Therefore, by \cref{lem:win2}, we have
\[
\lim_{m \to \infty}
\Pr_{\mathcal{P} \sim \mathrm{IC}(n,m)} \left[
\exists c \in [m] \text{ s.t. }
\forall b \in [m] \setminus \{c\},\;
|c \succ_{\mathcal{P}} b| \ge \alpha n-\left\lfloor\frac{k-1}{k}n\right\rfloor
\right]
= 0,
\]
which means that such a candidate $c$ is unlikely to exist.
 
\section{\texorpdfstring{$\alpha$}{Alpha}-Dominating Sets in the Limit}
\label{sec:dominating}
Now we turn to the stronger notion of $\alpha$-dominating sets and study the probability of their existence in a large impartial culture. Concretely, we will prove \cref{thm:alpha-dominating}. Recall that $g_{\alpha,k}(n,m)$ denotes the probability that there exists an $\alpha$-dominating committee $S \subseteq [m]$ of size $k$ in an impartial culture with $n$ voters and $m$ candidates.
\begin{theorem}
\label{thm:alpha-dominating}
For all constants $k \in \mathbb{N}^+$ and $\alpha \in [0, 1]$, there exists a number $N\in\mathbb{N}$ such that for every $n > N$, it holds that
\[
\lim_{m \to \infty} g_{\alpha,k}(n,m) =
\begin{cases}
1&\text{if }\alpha< \frac{k-1}{2k},\\[3pt]
0&\text{if }\alpha> \frac{k-1}{2k}.
\end{cases}
\]
\end{theorem}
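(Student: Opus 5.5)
The proof has two sides, and both will use the log-scale parametrization that already appeared in \cref{lem:win2}. For any candidate $a$ and voter $v$, write $1-s_{v,a}=m^{-x_{v,a}}$ with $x_{v,a}\ge 0$. Then a fixed candidate has exponent profile at least a given vector $\mathbf{x}\in\mathbb{R}_{\ge 0}^n$ with probability $m^{-\sum_v x_v}$. So, up to $\widetilde{O}_n$ factors, profiles with $\sum_v x_v<1$ are realized by many candidates, and profiles with $\sum_v x_v>1$ are realized by none. Two facts drive the comparisons:
\begin{itemize}
\item on a voter $v$ where both $a$ and $b$ have exponents bounded away from $0$, $a$ beats $b$ essentially iff $x_{v,a}>x_{v,b}$;
\item if $x_{v,b}=0$ (so $b$ is a "generic" uniform score on $v$), then $a$ beats $b$ on $v$ with probability about $1/2$, independently across such voters.
\end{itemize}
Both sides of \cref{thm:alpha-dominating} thus reduce to a combinatorial game between $k$ committee profiles and one outside profile, subject to the budget $\sum_v x_v\le 1$.

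\emph{Positive side ($\alpha<\frac{k-1}{2k}$).} I would split the voters into $k$ equal groups $V_1,\dots,V_k$ and define a cyclic committee (in the spirit of \cref{def:cyclic}). Member $c_i$ is the candidate maximizing $\sum_{v\in W_i}s_{v,c}$ over a window $W_i$ made of $h$ consecutive groups, $W_i=V_i\cup\dots\cup V_{i+h-1}$ with indices mod $k$. Each $c_i$ then has exponent about $1/|W_i|$ on $W_i$ and is generic elsewhere. For an outside candidate $a$ with profile $\mathbf{x}$, $c_i$ beats $a$ on
\begin{itemize}
\item every voter of $W_i$ where $x_v<1/|W_i|$, and
\item about half of the voters outside $W_i$ where $x_v$ is small.
\end{itemize}
Voters where $a$ has large exponent are lost to every member. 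A union bound as in Case~1 of \cref{sec:winning} (including the first-order stochastic dominance relaxation of $a$ to a fresh candidate) reduces the claim to an exponent inequality. We need that every $\mathbf{x}$ with $\sum_v x_v\le 1+\varepsilon$ lets some $c_i$ collect at least $\alpha n$ voters, up to concentration over the $\Theta(n)$ generic voters; this is why $n$ must be large. The budget lets $a$ reach exponent $\ge 1/|W_i|$ on at most $|W_i|$ voters. Optimizing $h$ should show that the worst such $a$ still leaves some $c_i$ with about $\frac{1}{2}\cdot\frac{k-1}{k}n$ voters. Heuristically, $a$ concentrates on one window and then loses half of the remaining $(k-1)/k$ fraction to the member whose window is disjoint. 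I will choose $h$ and the group sizes by solving this small LP explicitly, and then do the probability bookkeeping as in \cref{sec:winning}.

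\emph{Negative side ($\alpha>\frac{k-1}{2k}$).} Here I would fix an arbitrary committee $S$ of realized candidates, or rather their exponent profiles. A union bound over committees is too costly, so instead I would first discretize profiles into polynomially many types, up to $\widetilde{O}_n$ losses, and argue type by type. For each $S$ I will write the blocking problem as an LP. A mixed outside candidate is a fractional profile $\mathbf{x}$ with $\sum_v x_v<1$, and it must beat each $c\in S$ on more than $(1-\alpha)n$ voters, counting a generic comparison as worth $1/2$. Using LP duality together with the constraint that every member also obeys the budget $\sum_v x_{v,c}\le 1$, I expect to show that such a fractional blocker always exists exactly when $\alpha>\frac{k-1}{2k}$. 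The idea is that by averaging, some distribution over voters is "cheap" against all $k$ members simultaneously. Next, I would round the fractional profile independently voter by voter. Hoeffding over the $n$ voters, with $n$ large, shows the rounded profile still beats every member on more than $(1-\alpha)n$ voters with high probability. Finally, since its total exponent is strictly below $1$, with probability $1-o(1)$ at least one of the $m$ real candidates realizes a profile dominating the rounded one. I would show this by a second-moment or independence argument over the fresh candidates not in $S$, and it must hold simultaneously for all committees via the finitely many types.

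\emph{Main obstacle.} I expect the duality step on the negative side to be the hardest. It requires pinning down that the value of the blocking LP crosses exactly at $\frac{k-1}{2k}$ for every configuration of $k$ committee profiles, including committees whose members were chosen adversarially after seeing the profile. I would first try the explicit dual certificate that spreads the budget uniformly over the voters where the members are collectively weakest. The uniformity over committees I would handle via the type discretization.
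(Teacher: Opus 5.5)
\textbf{The positive side has a genuine gap.} You credit $c_i$ with about half of the voters outside $W_i$ on which $a$ is generic, and justify this by concentration over $\Theta(n)$ voters. But $n$ is fixed while $m\to\infty$, and the bad event must be union-bounded over all $m$ outsiders. Hoeffding gives a failure probability of only $e^{-\Theta(n)}$ per outsider, which is a constant in $m$. Indeed, for a fresh candidate, the probability of beating \emph{every} member on \emph{every} voter where that member is generic is a positive quantity depending on $n$ and $k$ but not on $m$. So in your exponent calculus, generic comparisons are free wins for the outsider. Your second bullet is also wrong when $x_{v,a}>0=x_{v,b}$: then $a$ wins with probability $1-o(1)$, not $1/2$.

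Without these phantom half-wins, your committee is blocked below the threshold. Take equal groups and windows of $h$ consecutive groups. Let $T$ consist of an $\frac{h}{k}(1-\delta)$-fraction of every group, and consider outsiders with exponent $\frac{k}{hn}(1+\delta)$ on $T$. Their total exponent is $1-\delta^2$, so about $m^{\delta^2}$ candidates have this profile, and with high probability one of them also beats every generic member on every voter of $[n]\setminus T$. Such an $a$ loses to $c_j$ only on $W_j\setminus T$, that is, on about $\frac{h}{k}\bigl(1-\frac{h}{k}\bigr)n\le\frac{n}{4}$ voters. This is below $\frac{k-1}{2k}n$ for every $k\ge3$ and every $h$. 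The structural reason is ties: on every voter, the $h$ members whose windows contain it share the level $k/(hn)$, the other $k-h$ share level $0$, and the outsider pays once to beat a whole tied block. For $k=2$, $h=1$, your committee coincides with the paper's and is fine. Your heuristic reaches $\frac{k-1}{2k}$ only through the half-wins.

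\textbf{The missing idea is a graded cyclic committee} (\cref{def:cyclic}). Member $c_j$ has exponent about $\frac{2(k-d)}{(k-1)n}$ on group $V_{j+d-1}$ (indices mod $k$), for $d=1,\dots,k$. Each member thus uses total exponent about $1$, and it exists with high probability when $r=\ln m$. Meanwhile, on every voter the $k$ members occupy $k$ distinct, equally spaced levels. An outsider whose score on $v$ lies in $[\theta_i,\theta_{i-1})$ pays exponent about $\frac{2(k-i)}{(k-1)n}$ there and still loses to $i-1$ members. So if the total win count $t(S,a)$ were below $\bigl(\frac{k-1}{2}-\varepsilon k\bigr)n$, the outsider would need total exponent exceeding $1$ by a constant factor, which with high probability no candidate has. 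Pigeonhole over the $k$ members then yields one with at least $\alpha n$ wins. No generic comparison is ever counted, and large $n$ is needed only to absorb the rounding of $\lceil n/k\rceil$ versus $n/k$.

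\textbf{On the negative side,} your outline matches the paper's architecture: discretize, use duality, round independently, apply Hoeffding, and find a real candidate dominating every good discretized vector. However, the core step is only announced, and spreading budget over the members' weakest voters is not the certificate that works. The paper's mixed blocker copies, independently per voter, member $j$'s discretized coordinate with probability $x_j$. With $D_{ij}=|\{v:\beta_{i,v}\le\beta_{j,v}\}|$, one has $D_{ii}=n$ and $D_{ij}+D_{ji}\ge n$. Minimax then gives value at least $\min_y\sum_{i,j}y_iy_jD_{ij}\ge\frac{n}{2}\bigl(1+\sum_i y_i^2\bigr)\ge\frac{k+1}{2k}n$.

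Also, your claim that the blocker's total exponent is strictly below $1$ is not automatic. Its expected log-budget equals the average of the members' budgets, which can be $1+o(1)$. The paper creates slack by putting the exponent $\frac{n-1}{n}$ in $B(\bm{\beta})$ and requiring the real candidate to beat the blocker on only $n-1$ voters, together with Markov's inequality on $-\ln B(\bm{\beta^*})$.
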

To prove \cref{thm:alpha-dominating}, we consider the cases of $\alpha < \frac{k-1}{2k}$ and $\alpha > \frac{k-1}{2k}$ separately.

\paragraph{Case 1 (positive side): $\alpha \in \left[0, \frac{k-1}{2k}\right)$.}
In this case, we show that an $\alpha$-dominating set is likely to exist:
\[
\lim_{m\to\infty} g_{\alpha,k}(n,m)=1.
\]

Throughout this paper, we adopt the convention that $i \bmod k$ returns a number in $\{0, 1, \ldots, k - 1\}$. As an example, we have $(-5) \bmod 4 = 3$. We now define $r$-cyclic-threshold committees and will use them frequently in the case of $\alpha \in \left[0, \frac{k-1}{2k}\right)$. We illustrate $r$-cyclic-threshold committees in \cref{fig:cyclic}.

\begin{definition}[$r$-cyclic-threshold committee]
\label{def:cyclic}
Fix $k \geq 2$ and $r \in (0, m)$. Partition the $n$ voters into $k$ groups $V_1,\dots,V_k$, such that for each $i\in [k]$, the group
$V_i$ is $\{\lceil\frac{i-1}{k}n\rceil+1,\dots,\lceil\frac{i}{k}n\rceil\}$. For each $i\in[k]$, we use $\theta_i$ to denote the threshold $1-\exp\left(
        \frac{k-i}{k-1}\cdot 
        \frac{2(\ln r - \ln m)}{n}\cdot
        \frac{n/k}{\lceil n/k\rceil}
      \right)$.
We say that a committee\footnote{Here, in our definition, the tuple $(c_1,\dots,c_k)$ may contain repeated candidates. Conventionally, a committee is a subset (that is, without repetition) of the candidates. However, for convenience in our exposition, we define an $r$-cyclic-threshold committee as a tuple. In the end, we can simply remove duplications from it to obtain a committee without repetition.} $S = (c_1,\dots,c_k) \in [m]^k$ is an \emph{$r$-cyclic-threshold committee} if for every $i,j\in[k]$ and every voter $v\in V_i$, we have $s_{v,c_j} \geq \theta_{((i-j)\bmod k)+1}$.
\end{definition}

\definecolor{mycolor}{rgb}{0.65, 0.35, 0.45}

\begin{figure}[t]
\centering
\begin{tikzpicture}[scale=1.4]

\def\k{5}
\def\cellsize{0.9}

\foreach \i in {1,...,\k} {
  \pgfmathsetmacro{\clabel}{(\i+0.5)*\cellsize}
  \pgfmathsetmacro{\Vlabel}{-(\i+0.5)*\cellsize};
  \node at (\clabel, -0.4) [below] {$c_{\i}$};
  \node at (0.8,\Vlabel) [left] {$V_{\i}$};
}

\foreach \i in {1,...,\k} {
  \foreach \j in {1,...,\k} {
\pgfmathsetmacro{\d}{mod(\i-\j+\k,\k)}
\pgfmathsetmacro{\shade}{100 - 15*\d}
    \fill[mycolor!\shade] 
      (\j*\cellsize, -\i*\cellsize) 
      rectangle ++(\cellsize,-\cellsize);
    \draw 
      (\j*\cellsize, -\i*\cellsize) 
      rectangle ++(\cellsize,-\cellsize);
  }
}

\foreach \i in {1,...,\k} {
  \foreach \j in {1,...,\k} {

\pgfmathsetmacro{\idx}{int(mod(\i-\j+\k,\k)+1)}

\draw
      (\j*\cellsize, -\i*\cellsize)
      rectangle ++(\cellsize,-\cellsize);

\node at (\j*\cellsize+0.5*\cellsize, -\i*\cellsize-0.5*\cellsize)
      {$\theta_{\idx}$};
  }
}

\pgfmathsetmacro{\xlabel}{3*\cellsize}
\pgfmathsetmacro{\ylabel}{-(\k+2)*\cellsize}

\node at (\xlabel + 0.4, \ylabel + 0.4)
{\small darker color = higher threshold};

\end{tikzpicture}
\caption{Structure of an $r$-cyclic-threshold committee. Rows correspond to voter groups $V_i$, and columns correspond to committee members $c_j$. Each cell indicates the threshold $\theta_{((i-j)\bmod k)+1}$ required for voters in $V_i$ toward candidate $c_j$.}
\label{fig:cyclic}
\end{figure}
 
Let $\varepsilon=\frac{k-1}{2k}-\alpha$. Set a parameter $r_0=\ln m$. We will show that
\begin{itemize}
    \item an $r_0$-cyclic-threshold committee is likely to exist (stated in \cref{lem:find_S}), and
    \item it is likely that all $r_0$-cyclic-threshold committees are $\alpha$-dominating committees.
\end{itemize}
Hence, if we select any $r_0$-cyclic-threshold committee (which is likely to exist), it is probably also an $\alpha$-dominating committee.
\begin{lemma}[Existence of an $r$-cyclic-threshold committee]
\label{lem:find_S}
Let $r < m$ satisfy $r=\omega(1)$ as $m \to \infty$. In an impartial culture, with probability $1-o(1)$ as $m\to\infty$, there exists an $r$-cyclic-threshold committee.
\end{lemma}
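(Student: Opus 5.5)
We will show that for each $j\in[k]$ separately, with probability $1-o(1)$ there is a candidate $c_j$ satisfying the column constraints of \cref{def:cyclic}. The union bound over the $k$ indices then gives the lemma. Repetitions are allowed, so no disjointness or dependence between columns needs to be handled.

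Fix $j$ and a single candidate $a$. The events $s_{v,a}\ge\theta_{((i-j)\bmod k)+1}$ for $v\in V_i$, $i\in[k]$, are independent, so
\[
p_j\coloneqq\Pr[a\text{ qualifies as }c_j]=\prod_{i\in[k]}(1-\theta_{((i-j)\bmod k)+1})^{|V_i|}.
\]
Write $L=\frac{2(\ln r-\ln m)}{n}\cdot\frac{n/k}{\lceil n/k\rceil}$. Then $1-\theta_\ell=\exp\left(\frac{k-\ell}{k-1}L\right)$. As $i$ ranges over $[k]$, the index $\ell=((i-j)\bmod k)+1$ ranges over all of $[k]$ exactly once. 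Using $|V_i|\le\lceil n/k\rceil$ and $L<0$,
\[
\ln p_j=\sum_{i}|V_i|\frac{k-\ell}{k-1}L\ \ge\ \lceil n/k\rceil L\sum_{\ell=1}^{k}\frac{k-\ell}{k-1}=\lceil n/k\rceil L\cdot\frac{k}{2}=\ln r-\ln m.
\]
This uses $\sum_{\ell}(k-\ell)=k(k-1)/2$, and the factor $\frac{n/k}{\lceil n/k\rceil}$ in $\theta$ cancels exactly. Hence $p_j\ge r/m$, which is precisely why the thresholds were normalized this way.

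The $m$ candidates are independent, so
\[
\Pr[\text{no candidate qualifies as }c_j]=(1-p_j)^m\le(1-r/m)^m\le e^{-r}=o(1),
\]
since $r=\omega(1)$. A union bound over the $k=O(1)$ indices shows that with probability at least $1-ke^{-r}=1-o(1)$ every slot $j$ has a qualifying candidate. The resulting tuple $(c_1,\dots,c_k)$ is then an $r$-cyclic-threshold committee.

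The only step needing care is the exponent computation: checking that the cyclic permutation covers each threshold exactly once, and that the bound $|V_i|\le\lceil n/k\rceil$ goes in the right direction. It does, because each exponent is nonpositive, so enlarging $|V_i|$ only decreases $p_j$, giving a valid lower bound. We also need $\theta_\ell\in[0,1)$, which holds because $r<m$ makes $L<0$, and $\theta_k=0$.
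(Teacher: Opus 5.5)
Your proof is correct and is essentially the paper's argument: a union bound over the $k$ committee slots, a per-candidate qualification probability bounded below by $\prod_{\ell}(1-\theta_\ell)^{\lceil n/k\rceil}=r/m$, and $k(1-r/m)^m\le k e^{-r}=o(1)$. You are somewhat more careful than the paper's write-up in two places: you explain why replacing $|V_i|$ by $\lceil n/k\rceil$ gives a valid lower bound (each exponent is nonpositive), and you index the constraints cleanly column by column.
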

\begin{proof}
The probability that there exists no $r$-cyclic-threshold committee is
\begin{align*}
&\ \Pr_{\mathcal{P}\sim\mathrm{IC}(n,m)}[\nexists S=(c_1,\dots,c_k)\text{ s.t.\@ $\forall i,j\in\{1,\dots,k\}$ and $\forall v\in V_i$, $s_{v,c_j}\ge\theta_{((i-j)\bmod k)+1}$}]\\
\le&\ \sum_{i=1}^k \Pr_{\mathcal{P}\sim\mathrm{IC}(n,m)}[\nexists c_i\text{ s.t.\@ $\forall j\in\{1,\dots,k\}$ and $\forall v\in V_i$, $s_{v,c_j}\ge\theta_{((i-j)\bmod k)+1}$}]\tag{union bound}\\
\le&\ k(1-(1-\theta_1)^{\lceil\frac{n}{k}\rceil}\cdots(1-\theta_k)^{\lceil\frac{n}{k}\rceil})^m.
\end{align*}
We denote $(1-\theta_1)^{\lceil\frac{n}{k}\rceil}\cdots(1-\theta_k)^{\lceil\frac{n}{k}\rceil}$ as $p$ and then calculate $p$ as
\begin{align*}
p=&\ \mathrm{exp}\left(\left\lceil\frac{n}{k}\right\rceil\sum_{i=1}^k\frac{k-i}{k-1}\cdot\frac{2(\ln r-\ln m)}{n}\cdot\frac{n/k}{\lceil n/k \rceil}\right)\\
=&\ \mathrm{exp}(\ln r-\ln m)=\frac{r}{m}.
\end{align*}

Since $r < m$ and $r=\omega(1)$ as $m \to \infty$, we have
\[
k(1-p)^m=k\left(1-\frac{r}{m}\right)^m\le k\exp\left(-\frac{r}{m} \cdot m\right) \leq \frac{k}{1 + r} = o(1)
\]
when $m\to\infty$. Hence, such a committee exists with probability $1 - o(1)$ when $m\to\infty$.
\end{proof}
Now that we have finished the proof of \cref{lem:find_S}, we move on to the next step and show that it is likely that all $r_0$-cyclic-threshold committees are $\alpha$-dominating committees.

Let $S=(c_1,\dots,c_k)$ be any $r_0$-cyclic-threshold committee. For any candidate $a\in [m]$, define ``times of win'' for $S$ against $a$ as
\[
t(S,a)\coloneqq\sum_{v\in [n]}\sum_{i=1}^k\mathbf{1}[c_i\succ_{R_v} a].
\]
If $t(S,a)\ge\left(\frac{k-1}{2k}-\varepsilon\right)nk$, then there will exist a candidate $c\in S$ such that $c$ wins at least $\left(\frac{k-1}{2k}-\varepsilon\right)n$ voters against $a$ by the pigeonhole principle. Recall that $\varepsilon=\frac{k-1}{2k}-\alpha$, and thus if $t(S,a)\ge\left(\frac{k-1}{2k}-\varepsilon\right)nk$ for all $a \in [m]$, then $S$ will be an $\alpha$-dominating committee. Therefore, we will prove that $t(S,a)\ge\left(\frac{k-1}{2k}-\varepsilon\right)nk$ simultaneously holds for all $a \in [m]$ with high probability.

Now we move on to show that it is unlikely that there exists a candidate $a \in [m]$ and an $r_0$-cyclic-threshold committee $S$ such that $t(S,a)<\left(\frac{k-1}{2k}-\varepsilon\right)nk$.
Assume there exists a candidate $a\in [m]$ and an $r_0$-cyclic-threshold committee $S$ such that $t(S,a)<\left(\frac{k-1}{2k}-\varepsilon\right)nk$ (and in the end, we will prove that this assumption is unlikely to hold). Let 
\[
\widetilde{\pi_1}(a) \coloneqq \frac{1}{n}|\{v\in [n]:s_{v,a}\ge\theta_1\}|,
\]
\[
\widetilde{\pi_2}(a) \coloneqq \frac{1}{n}|\{v\in [n]:\theta_1>s_{v,a}\ge\theta_2\}|,
\]
\[
\dots
\]
\[
\widetilde{\pi_k}(a) \coloneqq \frac{1}{n}|\{v\in [n]:\theta_{k-1}>s_{v,a}\ge\theta_k\}|,
\]
\[
\widetilde{\pi_{k+1}}(a) \coloneqq \frac{1}{n}|\{v\in [n]:s_{v,a}<\theta_k\}| = 0.
\]
Consider the ``times of win'' for $S$ against this $a$. Note that each voter counted in $\widetilde{\pi_i}(a)$ prefers at least $i - 1$ candidates in $S$ to the candidate $a$. Therefore, we have
\begin{equation}
\widetilde{\pi_2}(a)n+2\widetilde{\pi_3}(a)n+3\widetilde{\pi_4}(a)n+\dots+(k-1)\widetilde{\pi_k}(a)n\le t(S,a) < \left(\frac{k-1}{2k}-\varepsilon\right)nk,\label{eq:pi}
\end{equation}
where the last inequality is our previous assumption on $a$. Motivated by \cref{eq:pi}, we define the notion of a feasible sequence.
\begin{definition}[Feasible sequence]
\label{def:feasible}
We call $(\pi_1,\dots,\pi_k) \in \left([0, 1] \cap \frac{\mathbb{Z}}{n}\right)^k$ a \emph{feasible sequence} if
\[
\pi_1+\dots+\pi_k=1,
\qquad
\pi_2+2\pi_3+\dots+(k-1)\pi_k
    <\left(\frac{k-1}{2k}-\varepsilon\right)k.
\]
\end{definition}

Note that $(\widetilde{\pi_1}(a),\dots,\widetilde{\pi_k}(a))$ must be a feasible sequence.

Let $\widetilde{\bm{\pi}}(a)$ denote $(\widetilde{\pi_1}(a),\dots,\widetilde{\pi_k}(a))$ and consider any feasible sequence $\bm{\pi} = (\pi_1,\dots,\pi_k)$. We have
\begin{align*}
&\ \Pr_{\mathcal{P}\sim\mathrm{IC}(n,m)}[\exists a\text{ s.t.\@ }\widetilde{\bm{\pi}}(a)=\bm{\pi}]\\
\le&\ \sum_{a\in[m]} \Pr_{\mathcal{P}\sim\mathrm{IC}(n,m)}[\widetilde{\bm{\pi}}(a)=\bm{\pi}]\tag{union bound}\\
=&\ m \cdot \Pr_{\mathcal{P}\sim\mathrm{IC}(n,m)}[\widetilde{\bm{\pi}}(a_0)=\bm{\pi}].\tag{fix an arbitrary $a_0 \in [m]$; by symmetry}
\end{align*}
The event $\widetilde{\bm{\pi}}(a_0) = \bm{\pi}$ is equivalent to
\[
|\{v\in [n]:s_{v,a_0}\ge\theta_1\}| = \pi_1 n,
\]
\[
|\{v\in [n]:\theta_1>s_{v,a_0}\ge\theta_2\}| = \pi_2 n,
\]
\[
\dots
\]
\[
|\{v\in [n]:\theta_{k-1}>s_{v,a_0}\ge\theta_k\}| = \pi_k n.
\]
Since each voter contributes to exactly one of the $k$ intervals, we have
\begin{align*}
&\ \Pr_{\mathcal{P}\sim\mathrm{IC}(n,m)}[\widetilde{\bm{\pi}}(a_0)=\bm{\pi}]\\
=&\ \binom{n}{\pi_1 n, \pi_2 n, \ldots, \pi_k n} \cdot (1 - \theta_1)^{\pi_1 n} {(\theta_1 - \theta_2)}^{\pi_2 n} \cdots (\theta_{k-1} - \theta_k)^{\pi_k n}\\
\le&\ k^n \cdot (1 - \theta_1)^{\pi_1 n} {(1 - \theta_2)}^{\pi_2 n} \cdots (1 - \theta_k)^{\pi_k n}.
\end{align*}

If $\bm{\pi}$ is a feasible sequence, taking the logarithm of $(1-\theta_1)^{\pi_1n}\cdots(1-\theta_k)^{\pi_kn}$, we have
\begin{align*}
&\ \ln\left((1-\theta_1)^{\pi_1n}\cdots(1-\theta_k)^{\pi_kn}\right)\\
=&\ \sum_{i=1}^k\frac{k-i}{k-1}\cdot\frac{2(\ln r_0-\ln m)}{n}\cdot\frac{n/k}{\lceil n/k \rceil}\cdot \pi_in\tag{definition of $\theta_i$}\\
=&\ \frac{2(\ln r_0-\ln m)}{k-1}\cdot\frac{n/k}{\lceil n/k \rceil}\sum_{i=1}^k(k-i)\pi_i\\
=&\ \frac{2(\ln r_0-\ln m)}{k-1}\cdot\frac{n/k}{\lceil n/k \rceil}\left((k-1)\sum_{i=1}^k\pi_i-\sum_{i=1}^k(i-1)\pi_i\right).
\end{align*}
Then we apply \cref{def:feasible} and obtain
\begin{align*}
&\ \frac{2(\ln r_0-\ln m)}{k-1}\cdot\frac{n/k}{\lceil n/k \rceil}\left((k-1)\sum_{i=1}^k\pi_i-\sum_{i=1}^k(i-1)\pi_i\right)\\
<&\ \frac{2(\ln r_0-\ln m)}{k-1}\cdot\frac{n/k}{\lceil n/k \rceil}\left((k-1)-\left(\frac{k-1}{2k}-\varepsilon\right)k\right)\\
=&\ (\ln r_0-\ln m)\cdot\frac{n/k}{\lceil n/k \rceil}\left(1+\frac{2\varepsilon k}{k-1}\right).
\end{align*}
Recall that $r_0=\ln m$ and thus we have
\begin{align*}
&\ (1-\theta_1)^{\pi_1n}\cdots(1-\theta_k)^{\pi_kn}\\
=&\ \exp\left((\ln \ln m-\ln m)\cdot\frac{n/k}{\lceil n/k \rceil}\left(1+\frac{2\varepsilon k}{k-1}\right)\right)\\
=&\ \left(\frac{\ln m}{m}\right)^{\frac{n/k}{\lceil n/k \rceil}\left(1+\frac{2\varepsilon k}{k-1}\right)}.
\end{align*}
Since each voter contributes to exactly one of the $k$ intervals, we have
\[
|\{\bm{\pi}\mid \bm{\pi}\text{ is a feasible sequence} \land\bm{\pi}\text{ is in the image of }\widetilde{\bm{\pi}}\}|\le|\{\bm{\pi}\mid\bm{\pi}\text{ is in the image of }\widetilde{\bm{\pi}}\}|\le k^n.
\]
Therefore, we have (here and below, $\exists S$ is the shorthand for ``there exists an $r_0$-cyclic-threshold committee $S$'')
\begin{align*}
&\ \Pr_{\mathcal{P}\sim\mathrm{IC}(n,m)}\left[\exists a \exists S\text{ s.t.\@ }t(S,a)<\left(\frac{k-1}{2k}-\varepsilon\right)nk\right]\\
\le&\ \sum_{\substack{\bm{\pi}:\bm{\pi}\text{ is a feasible sequence} \\ \land\ \bm{\pi}\text{ is in the image of }\widetilde{\bm{\pi}}}}
\Pr_{\mathcal{P}\sim\mathrm{IC}(n,m)}[\exists a\text{ s.t.\@ }\widetilde{\bm{\pi}}(a)=\bm{\pi}]\\
\le&\ k^n\cdot m\cdot k^n \cdot (1-\theta_1)^{\pi_1n}\cdots(1-\theta_k)^{\pi_kn}\\
=&\ k^{2n} \cdot m \cdot\left(\frac{\ln m}{m}\right)^{\frac{n/k}{\lceil n/k \rceil}\left(1+\frac{2\varepsilon k}{k-1}\right)}\\
=&\ O_n\left(m^{1-\frac{n/k}{\lceil n/k \rceil}\left(1+\frac{2\varepsilon k}{k-1}\right)}\cdot (\ln m) ^{\frac{n/k}{\lceil n/k \rceil}\left(1+\frac{2\varepsilon k}{k-1}\right)}\right)
\end{align*}
in the limit of $m \to \infty$.
Notice that as long as $\frac{n}{n + k}\left(1+\frac{2\varepsilon k}{k-1}\right) > 1 + 2\varepsilon$ (that is, $n > (k^2 - k) \cdot \left(\frac{1}{2\varepsilon} + 1\right)$), we have 
\[
1-\frac{n/k}{\lceil n/k \rceil}\left(1+\frac{2\varepsilon k}{k-1}\right) < 1-\frac{n/k}{(n + k) / k}\left(1+\frac{2\varepsilon k}{k-1}\right) < -2\varepsilon,
\]
and thus we have
\[
\Pr_{\mathcal{P}\sim\mathrm{IC}(n,m)}\left[\exists a \exists S\text{ s.t.\@ }t(S,a)<\left(\frac{k-1}{2k}-\varepsilon\right)nk\right] = o_n(1)
\]
in the limit of $m \to \infty$ when $n > (k^2 - k) \cdot \left(\frac{1}{2\varepsilon} + 1\right)$.
Finally, we obtain 
\begin{align*}
&\ \lim_{m\to\infty}g_{\alpha,k}(n,m)\\
\ge&\ \lim_{m\to\infty}\Pr_{\mathcal{P}\sim\mathrm{IC}(n,m)}\left[\exists \text{an }r_0\text{-cyclic-threshold committee }S\land S\text{ is }\left(\frac{k-1}{2k}-\varepsilon\right)\!\text{-dominating}\right]\\
=&\ 1-\lim_{m\to\infty}\Pr_{\mathcal{P}\sim\mathrm{IC}(n,m)}[\nexists \text{an }r_0\text{-cyclic-threshold committee }S]\\
&\ \ \ -\lim_{m\to\infty}\Pr_{\mathcal{P}\sim\mathrm{IC}(n,m)}\left[\exists a \exists S\text{ s.t.\@ }t(S,a)<\left(\frac{k-1}{2k}-\varepsilon\right)nk\right]\\
=&\ 1.
\end{align*}

\paragraph{Case 2 (negative side): $\alpha \in \left(\frac{k-1}{2k}, 1\right]$.}
Here we show that for a sufficiently large $n$,
\[
\lim_{m\to\infty} g_{\alpha,k}(n,m)=0.
\]

Given the number of candidates $m$ and the number of voters $n$, we discretize the possible score vectors into the set $\mathrm{DS}$:
\[
\mathrm{DS}\coloneqq\left\{\bm{\beta}=(\beta_v)_{v\in [n]}: \forall v\in [n], \beta_v\in\left\{1-\left(\frac{2\ln m}{m}\right)^{\frac{n^2+1}{n^2}\cdot\frac{p}{n^2}}\right\}_{p\in[n^2]}\right\}.
\]
Additionally, for simplicity, we define the function $B(\bm{\beta})$ as a ``score summary'' of $\bm{\beta}$:
\[
B(\bm{\beta})\coloneqq\left(\prod_{v\in [n]}(1-\beta_v)\right)^{\frac{n-1}{n}} \in [0, 1].
\]
Intuitively, a ``better'' candidate should have a lower score summary.

We state the following lemmas and will prove them later in \cref{sec:skipped}. First, we show that we can probably discretize all scores into $\mathrm{DS}$ so that no candidate has a score summary that is ``too good.''
\begin{lemma}
\label{lem:discretize}
Given $m$ candidates, $n$ voters,
and a score matrix $\mathbf{S}=(s_{v,a})_{v\in [n],a\in [m]}\sim \mathrm{Unif}[0,1]^{n\times m}$,
we use $A_{\mathrm{D}}$ to denote the event that for each candidate $a\in [m]$, there exists $\bm{\beta}\in \mathrm{DS}$ such that
\[
B(\bm{\beta})>\left(\frac{2\ln m}{m}\right)^{\frac{n^4-1}{n^4}} \text{ and } \left|\{v\in [n]: s_{v,a}<\beta_v\}\right|=n.
\]
Then we have
\[
\lim_{m\to\infty}\Pr_{\mathbf{S}\sim\mathrm{Unif}[0,1]^{n\times m}}\left[A_{\mathrm{D}}\right]=1.
\]
\end{lemma}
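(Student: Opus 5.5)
The plan is to construct $\bm{\beta}$ explicitly for each candidate by rounding its scores up to the grid, to show that this can fail only when the candidate's scores are extremely high, and then to take a union bound over candidates using the Gamma tail computation from the proof of \cref{lem:win2}.

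\textbf{Setup.} Write $L \coloneqq \frac{2\ln m}{m}$ and $\gamma \coloneqq \frac{n^2+1}{n^2}$. The grid values of $1-\beta_v$ are then $L^{\gamma p/n^2}$ for $p \in [n^2]$. These decrease in $p$, from $L^{\gamma/n^2}$ down to $L^{\gamma}$.

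\textbf{Rounding each candidate to the grid.} Fix a candidate $a$ and let $u_v \coloneqq 1-s_{v,a}$. For each voter $v$, let $p_v$ be the smallest $p \in [n^2]$ with $L^{\gamma p/n^2} < u_v$, and set $\beta_v \coloneqq 1-L^{\gamma p_v/n^2}$. Such a $p_v$ exists exactly when $u_v > L^{\gamma}$, and it immediately gives $s_{v,a} < \beta_v$. By minimality, either $p_v = 1$, or $L^{\gamma(p_v-1)/n^2} \ge u_v$.

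\textbf{When the rounded vector is good enough.} Since $\prod_v(1-\beta_v) = L^{\gamma\sum_v p_v/n^2}$, we have
\[
B(\bm\beta) = L^{\gamma (n-1)\sum_v p_v/n^3}.
\]
Because $L<1$, the condition $B(\bm\beta) > L^{(n^4-1)/n^4}$ is equivalent to
\[
\frac{\gamma(n-1)\sum_v p_v}{n^3} < \frac{n^4-1}{n^4}.
\]
Using $(n^4-1)/(n^2+1) = n^2-1$, this simplifies to $\sum_v p_v < n(n+1) = n^2+n$.

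\textbf{Failure forces a very small product.} The construction fails for $a$ only in two cases, and each forces $\prod_v u_v \le L^{\gamma}$.
\begin{itemize}
\item If some $u_v \le L^\gamma$, then $\prod_v u_v \le u_v \le L^\gamma$, since every factor is at most $1$.
\item If all $p_v$ exist but $\sum_v p_v \ge n^2+n$, then $\sum_v (p_v-1) \ge n^2$. We have $u_v \le L^{\gamma(p_v-1)/n^2}$ for every $v$: for $p_v \ge 2$ this is the minimality property, and for $p_v=1$ it is the trivial bound $u_v \le 1$. Multiplying gives $\prod_v u_v \le L^{\gamma\sum_v(p_v-1)/n^2} \le L^\gamma$.
\end{itemize}

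\textbf{Probability bound and union bound.} Each $-\ln u_v$ is an independent $\mathrm{Exp}(1)$ variable, exactly as in the proof of \cref{lem:win2}. The Gamma tail formula therefore gives, for small $x$,
\[
\Pr\Bigl[\prod_v u_v \le x\Bigr] = x\sum_{t=0}^{n-1}\frac{(\ln(1/x))^t}{t!} = \widetilde{O}_n(x).
\]
With $x = L^{\gamma} = L^{1+1/n^2}$, the failure probability for one candidate is $\widetilde{O}_n\bigl(m^{-1-1/n^2}\bigr)$. A union bound over the $m$ candidates gives
\[
\Pr[\neg A_{\mathrm D}] \le \widetilde{O}_n\bigl(m^{-1/n^2}\bigr) \to 0,
\]
which proves the lemma.

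The only delicate point is choosing the rounding direction and verifying the exponent arithmetic. The exponents $\frac{n^2+1}{n^2}$ and $\frac{n^4-1}{n^4}$ in $\mathrm{DS}$ and in the statement appear designed to make the threshold come out to exactly $\sum_v p_v < n^2+n$, which lines up with the "$-1$ per coordinate" rounding loss. Everything else is a reuse of the Gamma-tail estimate already established in \cref{lem:win2}.
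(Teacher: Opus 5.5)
Your proposal is correct and is essentially the paper's argument: round each $1-s_{v,a}$ down to the grid value $\bigl(\frac{2\ln m}{m}\bigr)^{\frac{n^2+1}{n^2}\cdot\frac{p}{n^2}}$ just below it, show this succeeds whenever $\prod_v(1-s_{v,a}) > \bigl(\frac{2\ln m}{m}\bigr)^{\frac{n^2+1}{n^2}}$, and bound the complementary event with the Gamma tail plus a union bound over the $m$ candidates. The only difference is presentational: you argue the contrapositive via the exact integer threshold $\sum_v p_v < n^2+n$, whereas the paper, on that same event, directly lower-bounds $B(\bm{\beta_a}) \ge B(a)\cdot\bigl(\frac{2\ln m}{m}\bigr)^{\frac{n^2+1}{n^2}\cdot\frac{n-1}{n^2}}$.
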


Next, we show that for any $k$ discretized score vectors that are not ``too good,'' there exists another discretized score vector that is also not ``too good'' but can dominate each of the given ones in many dimensions.
\begin{lemma}
\label{lem:minimax}
Fix any $m$ and $n$, and fix any $k$ vectors $\bm{\beta_1},\bm{\beta_2},\ldots,\bm{\beta_k}\in \mathrm{DS}$ where $\bm{\beta_i}=(\beta_{i,v})_{v\in [n]}$ for each $i\in[k]$.
If, for each $i\in[k]$, we have
\[
B(\bm{\beta_i})>\left(\frac{2\ln m}{m}\right)^{\frac{n^4-1}{n^4}},
\]
then there exists $\bm{\beta_0}=(\beta_{0,v})_{v\in [n]}\in \mathrm{DS}$ such that
\[
B(\bm{\beta_0})>\frac{2\ln m}{m} \text{ and } \forall i\in[k], \left|\{v\in [n]: \beta_{0,v}\ge \beta_{i,v}\}\right|\ge\left(\frac{k+1}{2k}-n^{-\frac{1}{3}}\right)\cdot n,
\]
when
\[
\frac{\exp\left(2n^{\frac13}\right)}{n^4}>k.
\]
\end{lemma}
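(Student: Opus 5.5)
We prove the statement by working in logarithmic coordinates, finding a fractional (``mixed'') $\bm{\beta_0}$ through a minimax argument, and then rounding it independently across voters.

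\textbf{Coordinates.} Every $\bm\beta\in\mathrm{DS}$ corresponds to an integer vector $(p_v)_{v\in[n]}\in[n^2]^n$, where
\[
-\ln(1-\beta_v)=\tfrac{n^2+1}{n^2}\cdot\tfrac{p_v}{n^2}\cdot L, \qquad L\coloneqq\ln\tfrac{m}{2\ln m}>0.
\]
Since $x\mapsto 1-(2\ln m/m)^{x}$ is increasing in $x$, we have $\beta_{0,v}\ge\beta_{i,v}$ if and only if $p_{0,v}\ge p_{i,v}$. Moreover $-\ln B(\bm\beta)=\frac{n-1}{n}\cdot\frac{n^2+1}{n^4}L\sum_v p_v$, so both score-summary conditions are upper bounds on the ``mass'' $M(\bm\beta)\coloneqq\sum_v p_v$. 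Define
\[
T\coloneqq\frac{n^5}{(n-1)(n^2+1)}.
\]
Then $B(\bm\beta)>\frac{2\ln m}{m}$ is equivalent to $M(\bm\beta)<T$, and the hypothesis on each $\bm{\beta_i}$ is equivalent to $M(\bm{\beta_i})<(1-n^{-4})T$. The target is therefore purely combinatorial. We must find $\bm{\beta_0}$ such that each coordinate $\beta_{0,v}$ is taken from $\{\beta_{1,v},\dots,\beta_{k,v}\}$, which guarantees $\bm{\beta_0}\in\mathrm{DS}$. We also need $M(\bm{\beta_0})<T$, and $\bm{\beta_0}$ must weakly beat each $\bm{\beta_i}$ on at least $\bigl(\frac{k+1}{2k}-n^{-1/3}\bigr)n$ coordinates.

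\textbf{Fractional step via minimax.} Let $\mathcal X$ be the set of families $x=(x_{v,j})$, where $x_v$ is a probability vector over $j\in[k]$ indicating that coordinate $v$ copies $\beta_{j,v}$, subject to the constraint $\sum_{v,j}x_{v,j}\,p_{j,v}\le(1-n^{-4})T$. This set is convex and compact, and it is nonempty because it contains the copy of any single $\bm{\beta_i}$. The payoff against a distribution $\lambda$ on $[k]$ is
\[
\Phi(x,\lambda)=\sum_i\lambda_i\sum_v\sum_j x_{v,j}\,\mathbf 1[\beta_{j,v}\ge\beta_{i,v}],
\]
which is bilinear in $(x,\lambda)$. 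By von Neumann's minimax theorem it suffices to answer each fixed $\lambda$. Given $\lambda$, take $x_{v,j}=\lambda_j$ for all $v$. The expected mass is $\sum_j\lambda_jM(\bm{\beta_j})<(1-n^{-4})T$, so this $x$ lies in $\mathcal X$. For each voter $v$, draw $I,J$ independently from $\lambda$. Then
\[
\Pr[\beta_{J,v}\ge\beta_{I,v}]=\tfrac12\bigl(1+\Pr[\beta_{I,v}=\beta_{J,v}]\bigr)\ge\tfrac12\Bigl(1+\sum_i\lambda_i^2\Bigr)\ge\tfrac{k+1}{2k},
\]
using $\sum_i\lambda_i^2\ge 1/k$. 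Hence $\Phi\ge\frac{k+1}{2k}n$. Minimax then yields a single $x^\ast\in\mathcal X$ such that, for every $i$, $\sum_v\sum_j x^\ast_{v,j}\mathbf 1[\beta_{j,v}\ge\beta_{i,v}]\ge\frac{k+1}{2k}n$.

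\textbf{Rounding.} For each $v$ independently, set $\beta_{0,v}=\beta_{J_v,v}$ with $J_v\sim x^\ast_v$. For each $i$, the number of coordinates on which $\bm{\beta_0}$ beats $\bm{\beta_i}$ is a sum of $n$ independent indicators whose mean is at least $\frac{k+1}{2k}n$. By Hoeffding's inequality it falls below $\bigl(\frac{k+1}{2k}-n^{-1/3}\bigr)n$ with probability at most $\exp(-2n^{1/3})$. A union bound over the $k$ committee vectors bounds the total failure probability of the beating conditions by $k\exp(-2n^{1/3})$. The mass $M(\bm{\beta_0})$ is nonnegative with expectation at most $(1-n^{-4})T$, so Markov's inequality gives $\Pr[M(\bm{\beta_0})\ge T]\le 1-n^{-4}$. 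Consequently all requirements hold simultaneously with probability at least $n^{-4}-k\exp(-2n^{1/3})$, which is positive exactly under the hypothesis $\exp(2n^{1/3})/n^4>k$. Hence some realization $\bm{\beta_0}$ has all the required properties.

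\textbf{Main obstacle.} The delicate point is controlling the mass in the rounding step. The slack $n^{-4}T$ is far smaller than the fluctuations of $M(\bm{\beta_0})$, so a concentration bound on the mass is useless. Markov's inequality still leaves probability $n^{-4}$ of success, and this is exactly why the hypothesis compares $e^{2n^{1/3}}/n^4$ with $k$.
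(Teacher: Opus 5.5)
Your proof is correct and follows the paper's argument essentially step for step. It uses the same ingredients: a minimax bound of $\frac{k+1}{2k}n$ obtained by answering the committee's mixture $\lambda$ with itself, independent per-voter rounding, Hoeffding plus a union bound over the $k$ vectors, and Markov's inequality on the mass (the paper's $-\ln B(\bm{\beta^*})$, which is proportional to your $M$). The only difference is that the paper uses a single mixing vector $(x_1,\dots,x_k)$ shared by all voters. The expected-mass bound $\sum_j x_j M(\bm{\beta_j})<(1-n^{-4})T$ then holds automatically, and the minimax step is just the $k\times k$ matrix game with payoffs $D_{i,j}=|\{v:\beta_{i,v}\le\beta_{j,v}\}|$. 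Your enlarged, mass-constrained per-voter strategy set $\mathcal X$ is therefore valid but unnecessary.
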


Finally, we show that for any discretized score vector that is not ``too good,'' there is probably a candidate that dominates the given score vector in almost all dimensions.
\begin{lemma}
\label{lem:good-rounding}
Given $m$ candidates, $n$ voters,
and a score matrix \(\mathbf{S}=(s_{v,a})_{v\in [n],a\in [m]}\sim \mathrm{Unif}[0,1]^{n\times m}\),
we use $A_{\mathrm{R}}$ to denote the event that for every $\bm{\beta}\in \mathrm{DS}$,
if $B(\bm{\beta})>\frac{2\ln m}{m}$,
then there exists a candidate $a\in [m]$ such that
\[
\left|\{v\in [n]: s_{v,a}>\beta_v\}\right|\ge n-1.
\]
Then we have
\[
\lim_{m\to\infty}\Pr_{\mathbf{S}\sim\mathrm{Unif}[0,1]^{n\times m}}\left[A_{\mathrm{R}}\right]=1.
\]
\end{lemma}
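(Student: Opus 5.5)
The plan is a union bound over the finite set $\mathrm{DS}$, combined with the same per-candidate calculation used in the proof of \cref{lem:win2}, now run in the opposite direction. The key observation is that $|\mathrm{DS}| = (n^2)^n$ depends only on $n$ and not on $m$. So it suffices to show that, for each fixed $\bm{\beta}\in\mathrm{DS}$ with $B(\bm{\beta})>\frac{2\ln m}{m}$, the probability that \emph{no} candidate satisfies $|\{v: s_{v,a}>\beta_v\}|\ge n-1$ is $O(m^{-2})$. Then
\[
\Pr[\neg A_{\mathrm{R}}]\le n^{2n}\cdot O(m^{-2})\to 0 .
\]
For $m$ large, $\frac{2\ln m}{m}<1$, so every coordinate of every $\bm{\beta}\in\mathrm{DS}$ lies in $[0,1)$. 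Ties $s_{v,a}=\beta_v$ have probability zero, so strict versus weak inequalities do not matter.

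Fix such a $\bm{\beta}$. The columns $(s_{v,a})_{v\in[n]}$ are i.i.d.\ across candidates $a$, so the events ``$a$ beats $\bm{\beta}$ in at least $n-1$ coordinates'' are independent across $a$. Each has the same probability
\[
q(\bm{\beta})=\prod_{v\in[n]}(1-\beta_v)+\sum_{v\in[n]}\beta_v\prod_{v'\ne v}(1-\beta_{v'})=\Bigl(1+\sum_{v}\tfrac{\beta_v}{1-\beta_v}\Bigr)\prod_{v}(1-\beta_v).
\]
This is exactly the complement of the quantity bounded in \cref{lem:win2}, with $\beta_v$ in place of $s_v$. Reusing those two inequalities gives
\[
1+\sum_v\tfrac{\beta_v}{1-\beta_v}\ge \frac{1}{\min_v(1-\beta_v)}
\quad\text{and}\quad
\min_v(1-\beta_v)\le\Bigl(\prod_v(1-\beta_v)\Bigr)^{1/n}.
\]
Together they yield $q(\bm{\beta})\ge \bigl(\prod_v(1-\beta_v)\bigr)^{(n-1)/n}=B(\bm{\beta})>\frac{2\ln m}{m}$.

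Hence the probability that every one of the $m$ candidates fails is
\[
(1-q(\bm{\beta}))^m\le\Bigl(1-\tfrac{2\ln m}{m}\Bigr)^m\le e^{-2\ln m}=m^{-2}.
\]
Summing over the at most $n^{2n}$ vectors in $\mathrm{DS}$ gives $\Pr[\neg A_{\mathrm{R}}]\le n^{2n}m^{-2}$, which tends to $0$ as $m\to\infty$ for fixed $n$. This proves the lemma.

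I do not expect a real obstacle. The only substantive step is the lower bound $q(\bm{\beta})\ge B(\bm{\beta})$, which is the inequality chain already verified in \cref{lem:win2}. The remaining point is bookkeeping: $|\mathrm{DS}|$ must stay independent of $m$ so the union bound costs only a constant factor, and it does.
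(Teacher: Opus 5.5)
Your proof is correct and follows the paper's argument: a union bound over the $n^{2n}$ vectors in $\mathrm{DS}$, independence across candidates, and the per-candidate bound $q(\bm{\beta})\ge B(\bm{\beta})>\frac{2\ln m}{m}$, which makes the failure probability for each $\bm{\beta}$ at most $m^{-2}$. The only cosmetic difference is how $q(\bm{\beta})$ is bounded: the paper directly takes the probability of beating $\bm{\beta}$ on every voter except the one with the largest $\beta_v$, and your inequality chain reused from \cref{lem:win2} reduces algebraically to exactly that bound.
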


By \cref{lem:discretize}, 
when the event $A_{\mathrm{D}}$ occurs,
for any $k$-size committee $S=\{a_1,a_2,\ldots,a_k\}\subseteq C$, 
there exist $k$ deterministic score vectors $\bm{\beta_1},\bm{\beta_2},\ldots,\bm{\beta_k}\in \mathrm{DS}$,
where $\bm{\beta_i}=(\beta_{i,v})_{v\in [n]}$ for each $i\in[k]$,
such that for any $i\in[k]$,
\[
B(\bm{\beta_i})>\left(\frac{2\ln m}{m}\right)^{\frac{n^4-1}{n^4}} \text{ and } \left|\{v\in [n]: s_{v,a_i}<\beta_{i,v}\}\right|=n.
\]
By \cref{lem:minimax},
when
\[
\frac{\exp\left(2n^{\frac13}\right)}{n^4}>k,
\]
there exists a deterministic score vector $\bm{\beta_0}=(\beta_0,v)_{v\in [n]}\in \mathrm{DS}$ such that
\[
B(\bm{\beta_0})>\frac{2\ln m}{m} \text{ and } \forall i\in[k], \left|\{v\in [n]: \beta_{0,v}\ge \beta_{i,v}\}\right|\ge\left(\frac{k+1}{2k}-n^{-\frac{1}{3}}\right)\cdot n.
\]
By \cref{lem:good-rounding},
when the event $A_{\mathrm{R}}$ occurs,
there exists a candidate $a_0\in [m]$ such that
\[
\left|\{v\in [n]: s_{v,a_0}>\beta_{0,v}\}\right|\ge n-1.
\]

Hence, for any $i\in[k]$, we have
\[
\left|\{v\in [n]: s_{v,a_0}>s_{v,a_i}\}\right|\ge \left(\frac{k+1}{2k}-n^{-\frac{1}{3}}\right)\cdot n-1=\left(\frac{k+1}{2k}-n^{-\frac{1}{3}}-\frac{1}{n}\right)\cdot n,
\]
which implies that the committee $S$ is not $\alpha$-dominating for any
\[
\alpha\in\left(\frac{k-1}{2k}+n^{-\frac{1}{3}}+\frac{1}{n},1\right].
\]
By the union bound,
\[
g_{\alpha,k}(n,m)\le \Pr_{\mathbf{S}\sim\mathrm{Unif}[0,1]^{n\times m}}\left[\neg A_{\mathrm{D}}\right]+\Pr_{\mathbf{S}\sim\mathrm{Unif}[0,1]^{n\times m}}\left[\neg A_{\mathrm{R}}\right],
\]
for any
\[
\alpha\in\left(\frac{k-1}{2k}+n^{-\frac{1}{3}}+\frac{1}{n},1\right].
\]
By \cref{lem:discretize,lem:good-rounding}, for any $\alpha\in\left(\frac{k-1}{2k},1\right]$ and a sufficiently large $n$, we have
\begin{align*}
& \lim_{m\to\infty}g_{\alpha,k}(n,m)\\
\le&\ \left(\lim_{m\to\infty}\Pr_{\mathbf{S}\sim\mathrm{Unif}[0,1]^{n\times m}}\left[\neg A_{\mathrm{D}}\right]\right)
+\left(\lim_{m\to\infty}\Pr_{\mathbf{S}\sim\mathrm{Unif}[0,1]^{n\times m}}\left[\neg A_{\mathrm{R}}\right]\right)\\
=&\ 0.\qedhere
\end{align*}
 
\subsection{Skipped Proofs of \texorpdfstring{\cref{lem:discretize,lem:minimax,lem:good-rounding}}{Lemmas for Approximately Dominating Sets}}
\label{sec:skipped}
Now we complete the proofs for \cref{lem:discretize,lem:minimax,lem:good-rounding}.

\begin{proof}[Proof of \cref{lem:discretize}]

For each candidate $a\in [m]$, we similarly define
\[
B(a) \coloneqq \left(\prod_{v\in [n]}(1-s_{v,a})\right)^{\frac{n-1}{n}},
\]
which is a random variable depending on the random scores of the candidate $a$.

For any candidate $a\in [m]$, we have
\begin{align*}
&\ \Pr_{\mathbf{S}\sim\mathrm{Unif}[0,1]^{n\times m}}\left[B(a)\le \left(\frac{2\ln m}{m}\right)^{\frac{n^2+1}{n^2}\cdot\frac{n-1}{n}}\right]\\
=&\ \Pr_{\mathbf{S}\sim\mathrm{Unif}[0,1]^{n\times m}}\left[\prod_{v\in [n]}(1-s_{v,a})\le \left(\frac{2\ln m}{m}\right)^{\frac{n^2+1}{n^2}}\right]\\
=&\ \Pr_{\mathbf{S}\sim\mathrm{Unif}[0,1]^{n\times m}}\left[\sum_{v\in [n]}-\ln(1-s_{v,a})\ge -\frac{n^2+1}{n^2}\ln\left(\frac{2\ln m}{m}\right)\right].
\end{align*}
We calculate the CDF of $x_v=-\ln(1-s_{v,a})$ as follows:
\[
\Pr_{s_{v,a}\sim\mathrm{Unif}[0,1]}[x_v > z]=\Pr_{s_{v,a}\sim\mathrm{Unif}[0,1]}[-\ln(1-s_{v,a}) > z] = \Pr_{s_{v,a}\sim\mathrm{Unif}[0,1]}[1-s_{v,a} < e^{-z}] = e^{-z}.
\]
Therefore, the CDF of $x_v$ is $1 - e^{-z}$, which means that $x_v$ follows an exponential distribution. Therefore, $\sum_{v\in [n]}x_v$ follows a Gamma distribution: for all $z \geq 0$,
\[
\Pr\left[\sum_{v\in[n]}x_v < z\right]=1-e^{-z}\sum_{t=0}^{n-1}\frac{z^t}{t!}.
\]
Therefore, we have
\begin{align*}
&\ \Pr_{\mathbf{s}\sim\mathrm{Unif}[0,1]^n}\left[B(\mathbf{s})\le\left(\frac{2\ln m}{m}\right)^{\frac{n^2+1}{n^2}\cdot\frac{n-1}{n}}\right]\\
= &\ \Pr\left[\sum_{v\in [n]}x_v\ge -\frac{n^2+1}{n^2}\ln\left(\frac{2\ln m}{m}\right)\right]\\
= &\ \left(\frac{2\ln m}{m}\right)^{\frac{n^2+1}{n^2}}\sum_{t=0}^{n-1}\frac{\left(-\frac{n^2+1}{n^2}\ln\left(\frac{2\ln m}{m}\right)\right)^t}{t!}\\
= &\ \widetilde{O}_n\left(m^{-\frac{n^2+1}{n^2}}\right).
\end{align*}

We use $A_{\mathrm{B}}$ to denote the event that for every candidate $a\in [m]$, we have 
$B(a)>\left(\frac{2\ln m}{m}\right)^{\frac{n^2+1}{n^2}\cdot\frac{n-1}{n}}$.
By the union bound, we have
\begin{align*}
&\ \Pr_{\mathbf{S}\sim\mathrm{Unif}[0,1]^{n\times m}}\left[A_{\mathrm{B}}\right]\\
=&\ 1-\Pr_{\mathbf{S}\sim\mathrm{Unif}[0,1]^{n\times m}}\left[\exists a\in [m] \text{ s.t.\@ } B(a)\le \left(\frac{2\ln m}{m}\right)^{\frac{n^2+1}{n^2}\cdot\frac{n-1}{n}}\right]\\
\ge&\ 1-m\cdot \widetilde{O}_n\left({m^{-\frac{n^2+1}{n^2}}}\right)\\
=&\ 1-\widetilde{O}_n\left(m^{-\frac{1}{n^2}}\right).
\end{align*}

When the event $A_{\mathrm{B}}$ occurs, for any candidate $a\in [m]$ and for any voter $v\in [n]$, we have
\[
1-s_{v,a}\ge B(a)^{\frac{n}{n-1}}>\left(\frac{2\ln m}{m}\right)^{\frac{n^2+1}{n^2}}.
\]
Hence, there exists a unique integer $p_{v,a}\in[n^2]$ such that
\[
\left(\frac{2\ln m}{m}\right)^{\frac{n^2+1}{n^2}\cdot\frac{p_{v,a}}{n^2}}<1-s_{v,a}\le \left(\frac{2\ln m}{m}\right)^{\frac{n^2+1}{n^2}\cdot\frac{p_{v,a}-1}{n^2}}.
\]
Thus, for any candidate $a\in [m]$, we find a discretized score vector
\[
\bm{\beta_a}=\left(1-\left(\frac{2\ln m}{m}\right)^{\frac{n^2+1}{n^2}\cdot\frac{p_{v,a}}{n^2}}\right)_{v\in [n]},
\]
such that
\[
\left|\left\{v\in [n]: s_{v,a}<1-\left(\frac{2\ln m}{m}\right)^{\frac{n^2+1}{n^2}\cdot\frac{p_{v,a}}{n^2}}\right\}\right|=n,
\]
and that
\begin{align*}
&\ \left(\prod_{v\in [n]}\left(\frac{2\ln m}{m}\right)^{\frac{n^2+1}{n^2}\cdot\frac{p_{v,a}}{n^2}}\right)^{\frac{n-1}{n}}\\
\ge&\ \left(\prod_{v\in [n]}\left((1-s_{v,a})\cdot \left(\frac{2\ln m}{m}\right)^{\frac{n^2+1}{n^2}\cdot\frac{1}{n^2}}\right)\right)^{\frac{n-1}{n}}\\
=&\ B(a)\cdot \left(\frac{2\ln m}{m}\right)^{\frac{n^2+1}{n^2}\cdot\frac{n-1}{n^2}}.
\end{align*}

By the definition of $A_{\mathrm{B}}$, we have
\[
B(a)>\left(\frac{2\ln m}{m}\right)^{\frac{n^2+1}{n^2}\cdot\frac{n-1}{n}},
\]
and therefore
\[
B(a)\cdot \left(\frac{2\ln m}{m}\right)^{\frac{n^2+1}{n^2}\cdot\frac{n-1}{n^2}}>\left(\frac{2\ln m}{m}\right)^{\frac{n^4-1}{n^4}}.
\]
Hence, the event $A_{\mathrm{D}}$ occurs as long as the event $A_{\mathrm{B}}$ occurs. Therefore, we have
\[
\lim_{m\to\infty}\Pr_{\mathbf{S}\sim\mathrm{Unif}[0,1]^{n\times m}}\left[A_{\mathrm{D}}\right]\ge \lim_{m\to\infty}\Pr_{\mathbf{S}\sim\mathrm{Unif}[0,1]^{n\times m}}\left[A_{\mathrm{B}}\right]=1. \qedhere
\]
\end{proof}

\begin{proof}[Proof of \cref{lem:minimax}]
We define a $k\times k$ matrix $\mathbf{D}\coloneqq(D_{i,j})_{i,j\in[k]}$ such that for any $i,j\in[k]$,
\[
D_{i,j}\coloneqq\left|\left\{v\in [n]: \beta_{i,v}\le \beta_{j,v}\right\}\right|.
\]

Hence for any $i\in[k]$, we have
\[
D_{i,i}=n,
\]
and for any $i,j\in[k]$ with $i\ne j$, we have
\[
D_{i,j}+D_{j,i}\ge n.
\]

We construct a random score vector $\bm{\beta^*}=(\beta^*_v)_{v\in [n]}$ as follow.
First choose $k$ non-negative real numbers $x_1,x_2,\ldots,x_k$ that satisfy $\sum_{i=1}^k x_i=1$,
and then randomly set $\beta^*_v=\beta_{i,v}$ with probability $x_i$ independently for each voter $v\in [n]$.

Then for any $i\in[k]$, we have
\[
\E\left[\left|\{v\in [n]: \beta^*_v\ge \beta_{i,v}\}\right|\right]=\sum_{v\in [n]}\Pr\left[\beta^*_v\ge \beta_{i,v}\right].
\]

Consider an indicator variable
\[
\mathbf{1}[\beta_{j,v}\ge \beta_{i,v}]=
\begin{cases}
1,&\text{if } \beta_{j,v}\ge \beta_{i,v};\\
0,&\text{otherwise}.
\end{cases}
\]
By the definition of $\beta_v^*$, we have
\[
\sum_{v\in [n]}\Pr\left[\beta^*_v\ge \beta_{i,v}\right]=\sum_{v\in [n]}\sum_{j=1}^k x_j\cdot \mathbf{1}[\beta_{j,v}\ge \beta_{i,v}].
\]
By the definition of $D_{i,j}$, we have
\[
D_{i,j}=\sum_{v\in [n]}\mathbf{1}[\beta_{j,v}\ge \beta_{i,v}].
\]
Then by substitution, we obtain
\[
\E\left[\left|\left\{v\in [n]: \beta^*_v\ge \beta_{i,v}\right\}\right|\right]=\sum_{j=1}^k x_j\cdot D_{i,j}.
\]

We can choose $x_1,x_2,\ldots,x_k$ to maximize the minimum value of $\sum_{j=1}^k x_j\cdot D_{i,j}$ over all $i\in[k]$.
By the minimax theorem, we have
\[
\max_{x_1,x_2,\ldots,x_k}\min_{i\in[k]}\sum_{j=1}^k x_j\cdot D_{i,j}=\min_{y_1,y_2,\ldots,y_k}\max_{j\in[k]}\sum_{i=1}^k y_i\cdot D_{i,j},
\]
where $y_1,y_2,\ldots,y_k$ are non-negative real numbers that satisfy $\sum_{i=1}^k y_i=1$.

Because the maximum of a set of numbers is no smaller than any convex combination of them, we have
\[
\max_{j\in[k]}\sum_{i=1}^k y_i\cdot D_{i,j}
\ge\sum_{j=1}^k y_j\sum_{i=1}^k y_i\cdot D_{i,j}
=\sum_{i=1}^k\sum_{j=1}^k y_i y_j D_{i,j}
=\frac{1}{2}\sum_{i=1}^k\sum_{j=1}^k y_i y_j (D_{i,j}+D_{j,i}).
\]

Since $D_{i,j}+D_{j,i}\ge n$ for any $i,j\in[k]$ with $i\ne j$ and $D_{i,i}=n$ for any $i\in[k]$, we have
\[
\frac{1}{2}\sum_{i=1}^k\sum_{j=1}^k y_i y_j (D_{i,j}+D_{j,i})
\ge\frac{1}{2}\sum_{i=1}^k\sum_{j=1}^k y_i y_j n+\frac{1}{2}\sum_{i=1}^k y_i^2 n
=\frac{n}{2}\left(\sum_{i=1}^{k}y_i\right)\left(\sum_{j=1}^{k}y_j\right)+\frac{n}{2}\sum_{i=1}^k y_i^2.
\]
Note that $\sum_{i=1}^{k}y_i=1$. Hence
\[
\frac{n}{2}\left(\sum_{i=1}^{k}y_i\right)\left(\sum_{j=1}^{k}y_j\right)+\frac{n}{2}\sum_{i=1}^k y_i^2 = \frac{n}{2}+\frac{n}{2}\sum_{i=1}^k y_i^2 \ge \frac{n}{2}+\frac{n}{2k}=\left(\frac{k+1}{2k}\right)\cdot n.\
\]
In other words, there exists a choice of $x_1,x_2,\ldots,x_k$ such that for any $i\in[k]$,
\[
\E\left[\left|\left\{v\in [n]: \beta^*_v\ge \beta_{i,v}\right\}\right|\right]\ge \left(\frac{k+1}{2k}\right)\cdot n.
\]

By Hoeffding's inequality applied to $n$ independent Bernoulli random variables, we obtain that for any $i\in[k]$,
\[
\Pr\left[\left|\left\{v\in [n]: \beta^*_v\ge \beta_{i,v}\right\}\right|<\left(\frac{k+1}{2k}-n^{-\frac{1}{3}}\right)\cdot n\right]\le \exp\left(-2n^{\frac{1}{3}}\right).
\]
Applying the union bound over all $i\in[k]$, we get
\[
\Pr\left[\exists i\in[k] \text{ s.t.\@ } \left|v\in [n]: \beta^*_v\ge \beta_{i,v}\right|<\left(\frac{k+1}{2k}-n^{-\frac{1}{3}}\right)\cdot n\right]\le k\cdot\exp\left(-2n^{\frac{1}{3}}\right).
\]
By definition of $B(\bm{\beta^*})$,
\begin{align*}
&\ \E\left[-\ln B(\bm{\beta^*})\right]\\
=&\ \frac{n-1}{n}\cdot \sum_{v\in [n]}\E\left[-\ln(1-\beta^*_v)\right]\\
=&\ \frac{n-1}{n}\cdot \sum_{v\in [n]}\sum_{i=1}^k x_i\cdot \left(-\ln(1-\beta_{i,v})\right)\\
=&\ \frac{n-1}{n}\cdot \sum_{i=1}^k x_i\cdot \sum_{v\in [n]}\left(-\ln(1-\beta_{i,v})\right)\\
=&\ \sum_{i=1}^k x_i\cdot \left(-\ln B(\bm{\beta_i})\right).
\end{align*}
Since for any $i\in[k]$,
\[
B(\bm{\beta_i})>\left(\frac{2\ln m}{m}\right)^{\frac{n^4-1}{n^4}},
\]
we have
\[
\E\left[-\ln B(\bm{\beta^*})\right]<\frac{n^4-1}{n^4}\cdot \left(-\ln\left(\frac{2\ln m}{m}\right)\right).
\]
Applying Markov’s inequality, we obtain
\begin{align*}
&\ \Pr\left[B(\bm{\beta^*})\le \frac{2\ln m}{m}\right]\\
=&\ \Pr\left[-\ln B(\bm{\beta^*})\ge -\ln\left(\frac{2\ln m}{m}\right)\right]\\
\le&\ \frac{n^4-1}{n^4}.
\end{align*}
Hence, combining the two events via the union bound,
\begin{align*}
&\ \Pr\left[\exists i\in[k] \text{ s.t.\@ } \left|\{v\in [n]: \beta^*_v\ge \beta_{i,v}\}\right|<\left(\frac{k+1}{2k}-n^{-\frac{1}{3}}\right)\cdot n \text{ or } B(\bm{\beta^*})\le \frac{2\ln m}{m}\right]\\
\le&\ k\cdot \exp\left(-2n^{\frac{1}{3}}\right)+\frac{n^4-1}{n^4}\\
=&\ 1-\left(\frac{1}{n^4}-k\cdot \exp\left(-2n^{\frac{1}{3}}\right)\right).
\end{align*}
When
\[
\frac{\exp\left(2n^{\frac13}\right)}{n^4}>k,
\]
the above probability is strictly less than 1: there exists a realization of $\bm{\beta^*}$ such that for any $i\in[k]$, it holds that
\[
\left|\{v\in [n]: \beta^*_v\ge \beta_{i,v}\}\right|\ge \left(\frac{k+1}{2k}-n^{-\frac{1}{3}}\right)\cdot n,
\]
and
\[
B(\bm{\beta^*})>\frac{2\ln m}{m}.\qedhere
\]

\end{proof}

\begin{proof}[Proof of \cref{lem:good-rounding}]

Consider a discretized score vector $\bm{\beta}=(\beta_v)_{v\in [n]}$.
Then
\begin{align*}
&\ \Pr_{\mathbf{S}\sim\mathrm{Unif}[0,1]^{n\times m}}\left[\exists a\in [m] \text{ s.t.\@ } \left|\{v\in [n]: s_{v,a}>\beta_v\}\right|\ge n-1\right]\\
=&\ 1-\Pr_{\mathbf{S}\sim\mathrm{Unif}[0,1]^{n\times m}}\left[\forall a\in [m], \left|\{v\in [n]: s_{v,a}>\beta_v\}\right|<n-1\right].
\end{align*}
Since the candidates' scores are independent,
\begin{align*}
&\ \Pr_{\mathbf{S}\sim\mathrm{Unif}[0,1]^{n\times m}}\left[\forall a\in [m], \left|\{v\in [n]: s_{v,a}>\beta_v\}\right|<n-1\right]\\
=&\ \prod_{a\in [m]}\Pr_{\mathbf{S}\sim\mathrm{Unif}[0,1]^{n\times m}}\left[\left|\{v\in [n]: s_{v,a}>\beta_v\}\right|<n-1\right]\\
=&\ \prod_{a\in [m]}\left(1-\Pr_{\mathbf{S}\sim\mathrm{Unif}[0,1]^{n\times m}}\left[\left|\{v\in [n]: s_{v,a}>\beta_v\}\right|\ge n-1\right]\right).
\end{align*}
Let $v^*\in[n]$ be the voter that has the maximum $\beta_v$. Then,
\begin{align*}
&\ \Pr_{\mathbf{S}\sim\mathrm{Unif}[0,1]^{n\times m}}\left[\left|\{v\in [n]: s_{v,a}>\beta_v\}\right|\ge n-1\right]\\
\ge&\ \Pr_{\mathbf{S}\sim\mathrm{Unif}[0,1]^{n\times m}}\left[\forall v\in [n]\setminus\{v^*\}, s_{v,a}>\beta_v\right].
\end{align*}
By independence of the scores,
\begin{align*}
&\ \Pr_{\mathbf{S}\sim\mathrm{Unif}[0,1]^{n\times m}}\left[\forall v\in [n]\setminus\{v^*\}, s_{v,a}>\beta_v\right]\\
=&\ \prod_{v\in [n]\setminus\{v^*\}}\Pr_{\mathbf{S}\sim\mathrm{Unif}[0,1]^{n\times m}}\left[s_{v,a}>\beta_v\right]\\
=&\ \prod_{v\in [n]\setminus\{v^*\}}(1-\beta_v).
\end{align*}
Because $\beta_{v^*}\ge \beta_v$ for all $v\in [n]$, we have
\[
1-\beta_{v^*}\le \left(\prod_{v\in [n]}(1-\beta_v)\right)^\frac{1}{n},
\]
and therefore
\[
\prod_{v\in [n]\setminus\{v^*\}}(1-\beta_v)\ge \left(\prod_{v\in [n]}(1-\beta_v)\right)^{\frac{n-1}{n}}=B(\bm{\beta}).
\]
We then obtain
\begin{align*}
&\ \Pr_{\mathbf{S}\sim\mathrm{Unif}[0,1]^{n\times m}}\left[\exists a\in [m] \text{ s.t.\@ } \left|\{v\in [n]: s_{v,a}>\beta_v\}\right|\ge n-1\right]\\
\ge&\ 1-\left(1-B(\bm{\beta})\right)^m.
\end{align*}
If $B(\bm{\beta})>\frac{2\ln m}{m}$, then
\[
1-\left(1-B(\bm{\beta})\right)^m>1-\left(1-\frac{2\ln m}{m}\right)^m\ge 1-e^{-2\ln m}=1-\frac{1}{m^2}.
\]
Applying the union bound over all discretized score vectors in $\mathrm{DS}$, we have
\begin{align*}
&\ \Pr_{\mathbf{S}\sim\mathrm{Unif}[0,1]^{n\times m}}\left[A_{\mathrm{R}}\right]\\
\ge&\ 1-\sum_{\bm{\beta}\in \mathrm{DS} : B(\bm{\beta})>\frac{2\ln m}{m}}\Pr_{\mathbf{S}\sim\mathrm{Unif}[0,1]^{n\times m}}\left[\forall a\in [m], \left|\{v\in [n]: s_{v,a}>\beta_v\}\right|<n-1\right]\\
\ge&\ 1-|\mathrm{DS}|\cdot \frac{1}{m^2}.
\end{align*}
Since $|\mathrm{DS}|=n^{2n}$, we have
\[
\Pr_{\mathbf{S}\sim\mathrm{Unif}[0,1]^{n\times m}}\left[A_{\mathrm{R}}\right]\ge 1-\frac{n^{2n}}{m^2}.
\]
Thus,
\[
\lim_{m\to\infty}\Pr_{\mathbf{S}\sim\mathrm{Unif}[0,1]^{n\times m}}\left[A_{\mathrm{R}}\right]=1.\qedhere
\]
\end{proof}
 
\section{Discussion}

In this work, we studied the existence probability for $\alpha$-winning and $\alpha$-dominating sets in an impartial culture with many voters and even more candidates. Here, we present a few potential extensions to our work.
\begin{itemize}
    \item How likely do $\alpha$-winning and $\alpha$-dominating sets exist at exactly the threshold ($\frac{k - 1}{k}$-winning and $\frac{k - 1}{2k}$-dominating)? We expect that solutions to this question will fundamentally require new technical tools beyond our work.
    \item We showed that the thresholds are different in the iterated limit of $\lim_{n \to \infty} \lim_{m \to \infty}$ and the reversed iterated limit of $\lim_{m \to \infty} \lim_{n \to \infty}$. What is the behavior when neither variable is sufficiently larger than the other? In particular, is the limit considered in our work the worst case?
    \item Can our results be generalized to other preference generation models (beyond impartial cultures) or other utility models (beyond ranking)?
    \item Our probabilistic construction improves the previously best-known worst-case impossibility results for $\alpha$-dominating sets. Is this probabilistic construction also useful for providing impossibility results for other problems?
\end{itemize}
We leave these questions for future work.
 
\paragraph{Acknowledgment.} This work was partly carried out while Yifan Lin and Shenyu Qin were visiting the DIMACS Center at Rutgers University. Lirong Xia acknowledges NSF 2450124, 2517733, and 2518373 for support.

\newcommand{\etalchar}[1]{$^{#1}$}

\bibliographystyle{alpha}

\begin{thebibliography}{CDN{\etalchar{+}}25}

\bibitem[ABK{\etalchar{+}}06]{Alon06:Dominating}
Noga Alon, Graham~R. Brightwell, Hal~A. Kierstead, Alexandr~V. Kostochka, and Peter Winkler.
\newblock Dominating sets in k-majority tournaments.
\newblock {\em J. Comb. Theory {B}}, 96(3):374--387, 2006.

\bibitem[Arr63]{Arrow63:Social}
Kenneth Arrow.
\newblock {\em Social choice and individual values}.
\newblock New Haven: Cowles Foundation, 2nd edition, 1963.
\newblock 1st edition 1951.

\bibitem[BCT25]{Bourneuf2025:Neighborhood}
Romain Bourneuf, Pierre Charbit, and St{\'e}phan Thomass{\'e}.
\newblock {A Dense Neighborhood Lemma: Applications of Partial Concept Classes to Domination and Chromatic Number}.
\newblock In {\em Proceedings of FOCS}, 2025.

\bibitem[BGS16]{Brandt2016:Analyzing}
Felix Brandt, Christian Geist, and Martin Strobel.
\newblock {Analyzing the Practical Relevance of Voting Paradoxes via Ehrhart Theory, Computer Simulations, and Empirical Data}.
\newblock In {\em Proceedings of AAMAS}, pages 385--393, 2016.

\bibitem[BHS19]{Brandt2019:Exploring}
Felix Brandt, Johannes Hofbauer, and Martin Strobel.
\newblock {Exploring the No-Show Paradox for Condorcet Extensions Using Ehrhart Theory and Computer Simulations}.
\newblock In {\em Proceedings of AAMAS}, pages 520--528, 2019.

\bibitem[Bla48]{Black48:Rationale}
Duncan Black.
\newblock On the rationale of group decision-making.
\newblock {\em Journal of Political Economy}, 56(1):23--34, 1948.

\bibitem[Box79]{Box1979:Robustness}
George E.~P. Box.
\newblock Robustness in the strategy of scientific model building.
\newblock In R.~L. Launer and G.~N. Wilkinson, editors, {\em Robustness in Statistics}, pages 201--236. Academic Press, 1979.

\bibitem[CDN{\etalchar{+}}25]{Caizergues2025:Probability}
Emma Caizergues, Fran{\c c}ois Durand, Marc Noy, {\'E}lie de~Panafieu, and Vlady Ravelomanana.
\newblock {Probability of a Condorcet Winner for Large Electorates: An Analytic Combinatorics Approach}.
\newblock arXiv:2505.06028, 2025.

\bibitem[CFH{\etalchar{+}}24]{Conitzer2024:Social}
Vincent Conitzer, Rachel Freedman, Jobst Heitzig, Wesley~H. Holliday, Bob~M. Jacobs, Nathan Lambert, Milan Moss{\'e}, Eric Pacuit, Stuart Russell, Hailey Schoelkopf, Emanuel Tewolde, and William~S. Zwicker.
\newblock {Social Choice Should Guide AI Alignment in Dealing with Diverse Human Feedback}.
\newblock In {\em Proceedings of ICML}, 2024.

\bibitem[CLR{\etalchar{+}}25]{Charikar2025:Six-Candidates}
Moses Charikar, Alexandra Lassota, Prasanna Ramakrishnan, Adrian Vetta, and Kangning Wang.
\newblock {Six Candidates Suffice to Win a Voter Majority}.
\newblock In {\em Proceedings of STOC}, 2025.

\bibitem[Con85]{Condorcet1785:Essai}
Marquis~de Condorcet.
\newblock {\em Essai sur l'application de l'analyse \`a la probabilit\'e des d\'ecisions rendues \`a la pluralit\'e des voix}.
\newblock Paris: L'Imprimerie Royale, 1785.

\bibitem[CRW26]{Charikar2026:Approximately}
Moses Charikar, Prasanna Ramakrishnan, and Kangning Wang.
\newblock {Approximately Dominating Sets in Elections}.
\newblock In {\em Proceedings of SODA}, 2026.

\bibitem[DKNS01]{Dwork01:Rank}
Cynthia Dwork, Ravi Kumar, Moni Naor, and D.~Sivakumar.
\newblock Rank aggregation methods for the web.
\newblock In {\em Proceedings of the 10th World Wide Web Conference}, pages 613--622, 2001.

\bibitem[DM21]{Diss2021:Evaluating}
Mostapha Diss and Vincent Merlin, editors.
\newblock {\em {Evaluating Voting Systems with Probability Models}}.
\newblock Studies in Choice and Welfare. Springer International Publishing, 2021.

\bibitem[DP70]{DeMeyer1970:The-Probability}
Frank DeMeyer and Charles~R. Plott.
\newblock {The Probability of a Cyclical Majority}.
\newblock {\em Econometrica}, 38(2):345--354, 1970.

\bibitem[ELP16]{Elkind2016:Preference}
Edith Elkind, Martin Lackner, and Dominik Peters.
\newblock Preference restrictions in computational social choice: recent progress.
\newblock In {\em Proceedings of the Twenty-Fifth International Joint Conference on Artificial Intelligence}, pages 4062---4065, 2016.

\bibitem[ELS15]{Elkind2015:Condorcet}
Edith Elkind, J{\'e}r{\^o}me Lang, and Abdallah Saffidine.
\newblock {Condorcet winning sets}.
\newblock {\em Social Choice and Welfare}, 44:493--517, 2015.

\bibitem[GATC16]{Green-Armytage2016:Statistical}
James Green-Armytage, T.~Nicolaus Tideman, and Rafael Cosman.
\newblock Statistical evaluation of voting rules.
\newblock {\em Social Choice and Welfare}, 46(1):183--212, 2016.

\bibitem[GF76]{Gehrlein1976:The-probability}
William~V. Gehrlein and Peter~C. Fishburn.
\newblock {The probability of the paradox of voting: A computable solution}.
\newblock {\em Journal of Economic Theory}, 13(1):14--25, 1976.

\bibitem[GL11]{Gehrlein2011:Voting}
William~V. Gehrlein and Dominique Lepelley.
\newblock {\em {Voting Paradoxes and Group Coherence: The Condorcet Efficiency of Voting Rules}}.
\newblock Springer, 2011.

\bibitem[JMW20]{Jiang2020:Approximately}
Zhihao Jiang, Kamesh Munagala, and Kangning Wang.
\newblock {Approximately stable committee selection}.
\newblock In {\em Proceedings of STOC}, 2020.

\bibitem[JRTT95]{Jones1995:Condorcet}
Bradford Jones, Benjamin Radcliff, Charles Taber, and Richard Timpone.
\newblock {Condorcet Winners and the Paradox of Voting: Probability Calculations for Weak Preference Orders}.
\newblock {\em The American Political Science Review}, 89(1):137--144, 1995.

\bibitem[KR05]{Krishnamoorthy2005:Condorcet}
M.~S. Krishnamoorthy and M.~Raghavachari.
\newblock {Condorcet Winner Probabilities - A Statistical Perspective}.
\newblock arXiv:math/0511140, 2005.

\bibitem[LX16]{Lang16:Voting}
J\'{e}r\^{o}me Lang and Lirong Xia.
\newblock {Voting in Combinatorial Domains}.
\newblock In Felix Brandt, Vincent Conitzer, Ulle Endriss, J\'{e}r\^{o}me Lang, and Ariel Procaccia, editors, {\em {Handbook of Computational Social Choice}}, chapter~9. Cambridge University Press, 2016.

\bibitem[May71]{May1971:Some}
Robert~M. May.
\newblock Some mathematical remarks on the paradox of voting.
\newblock {\em Behavioral Science}, 16(2):143--151, 1971.

\bibitem[NSL26]{Nguyen2026:A-few-good}
Thanh Nguyen, Haoyu Song, and Young-San Lin.
\newblock {A few good choices}.
\newblock In {\em Proceedings of SODA}, 2026.

\bibitem[NW68]{Niemi1968:A-mathematical}
Richard~G. Niemi and Herbert~F. Weisberg.
\newblock {A mathematical solution for the probability of the paradox of voting}.
\newblock {\em Behavioral Science}, 13(4):317--323, 1968.

\bibitem[Pit25]{Pittel2025:On-Likelihood}
Boris Pittel.
\newblock {On Likelihood of a Condorcet Winner for Uniformly Random and Independent Voter Preferences}.
\newblock arXiv:2506.20613, 2025.

\bibitem[Saa95]{Saari1995:Basic}
Donald~G. Saari.
\newblock {\em {Basic Geometry of Voting}}.
\newblock Springer, 1995.

\bibitem[Sau22]{Sauermann2022:On-the-probability}
Lisa Sauermann.
\newblock {On the probability of a Condorcet winner among a large number of alternatives}.
\newblock arXiv:2203.13713, 2022.

\bibitem[Sen66]{Sen1966:A-Possibility}
Amartya~K. Sen.
\newblock {A Possibility Theorem on Majority Decisions}.
\newblock {\em Econometrica}, 32(2):491--499, 1966.

\bibitem[SN26]{SongN26}
Haoyu Song and Thanh Nguyen.
\newblock Approximate core of participatory budgeting via {L}inhahl equilibrium.
\newblock Manuscript, 2026.

\bibitem[SS04]{Sertel2004:Strong}
Murat~R. Sertel and M.~Remzi Sanver.
\newblock {Strong equilibrium outcomes of voting games are the generalized Condorcet winners}.
\newblock {\em Social Choice and Welfare}, 22:331--347, 2004.

\bibitem[TRG03]{Tsetlin2003:The-impartial}
Ilia Tsetlin, Michel Regenwetter, and Bernard Grofman.
\newblock The impartial culture maximizes the probability of majority cycles.
\newblock {\em Social Choice and Welfare}, 21(3):387--398, 2003.

\bibitem[Xia20]{Xia2020:The-Smoothed}
Lirong Xia.
\newblock {The Smoothed Possibility of Social Choice}.
\newblock In {\em Proceedings of NeurIPS}, 2020.

\bibitem[Xia21]{Xia2021:Semi-Random}
Lirong Xia.
\newblock {The Semi-Random Satisfaction of Voting Axioms}.
\newblock In {\em Proceedings of NeurIPS}, 2021.

\end{thebibliography}

\appendix
\section{Results for Impartial Culture with Many Voters}
For completeness, we include a simple proof of the phase transition thresholds in the alternative setting where the parameter $m$ is fixed and $n \to \infty$.
\begin{proposition}
\label{prop:large_n}
For all constants $k \in \mathbb{N}^+$ and $\alpha \in [0, 1]$, for every $m \in \mathbb{N}^+$, it holds that
\[
\lim_{n \to \infty}\Pr_{\mathrm{IC}(n,m)}\left[\text{An $\alpha$-winning committee exists} \right] = \begin{cases}
1&\text{if }\alpha < \frac{k}{k + 1},\\[3pt]
0&\text{if }\alpha > \frac{k}{k + 1},
\end{cases}
\]
and
\[
\lim_{n \to \infty}\Pr_{\mathrm{IC}(n,m)}\left[\text{An $\alpha$-dominating committee exists} \right] = \begin{cases}
1&\text{if }\alpha < \frac{1}{2},\\[3pt]
0&\text{if }\alpha > \frac{1}{2}.
\end{cases}
\]
\end{proposition}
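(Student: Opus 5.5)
Throughout, $m$ and $k$ are fixed and $n\to\infty$. By the multidimensional central limit theorem (or just Hoeffding plus a union bound over the finitely many pairs and subsets), the pairwise fractions $\frac1n|a\succ_{\mathcal P} b|$ concentrate at $\frac12$. The fractions $\frac1n|a\succ_{\mathcal P}S|$ concentrate at $\frac{1}{|S|+1}$, since under IC voter $v$ ranks $a$ above all of $S$ with probability $\frac{1}{|S|+1}$. All these quantities concern finitely many events, so a union bound gives simultaneous concentration within any fixed $\delta>0$ with probability $1-o(1)$. Whenever $m\le k$, the whole candidate set is trivially $\alpha$-winning and $\alpha$-dominating, so the statements must be read for $m>k$. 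I will assume $m\ge k+1$ for the winning claims and $m\ge k+1\ge 2$ for the dominating claims.

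\textbf{Winning sets.} For $\alpha<\frac{k}{k+1}$, fix any $k$-committee $S$. For each $a\notin S$, $\frac1n|S\succ_{\mathcal P} a|\to\frac{k}{k+1}$ in probability; this is the cited result \cite[Corollary 2]{Elkind2015:Condorcet}. A union bound over the at most $m$ outside candidates gives existence with probability $\to1$.

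For $\alpha>\frac{k}{k+1}$, I will use the exact identity
\[\sum_{S'\subseteq[m],|S'|=k+1}\sum_{a\in S'}\bigl|a\succ_{\mathcal P}S'\setminus\{a\}\bigr| = n\binom{m}{k+1}.\]
It holds because each voter has exactly one top element within each $(k+1)$-set. This identity already hints at the worst case, but what I actually need is simpler. Pick any candidate $a\notin S$. Then $|S\succ_{\mathcal P} a| = n-|a\succ_{\mathcal P} S|$, and $|a\succ_{\mathcal P}S|/n\to\frac1{k+1}$ for each of the finitely many pairs $(S,a)$. So with probability $\to1$, every $k$-committee $S$ has some $a\notin S$ with $|S\succ_{\mathcal P} a|<\alpha n$. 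The union bound runs over the $\binom mk(m-k)$ pairs, a number independent of $n$.

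\textbf{Dominating sets.} For $\alpha<\frac12$, take any committee $S$ and any $b\in S$. For each $a\ne b$, $|b\succ_{\mathcal P} a|/n\to\frac12>\alpha$, so with high probability $b$ alone $\alpha$-beats every other candidate. For $\alpha>\frac12$, note that $S$ being $\alpha$-dominating requires, for some outside $a$, a $b\in S$ with $|b\succ_{\mathcal P} a|\ge\alpha n$. But with high probability every pairwise fraction lies in $(1-\alpha,\alpha)$, so no committee $S\ne[m]$ qualifies, and $S\ne[m]$ holds since $m>k$.

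The main obstacle is only bookkeeping: making explicit that all concentration statements are over a finite family of events whose size depends on $m$ and $k$ but not on $n$. Once that is stated, the union bound closes both directions of both claims. The quantitative tool I would cite is Hoeffding's inequality, giving deviation probability $2e^{-2\delta^2 n}$ per event.
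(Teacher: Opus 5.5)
Your proposal is correct and matches the paper's argument. Both use the per-voter probabilities $\frac{k}{k+1}$ and $\frac12$, a concentration bound (Hoeffding), and a union bound over a family of (committee, outsider) or candidate-pair events whose size depends only on $m$ and $k$; your union over all ordered pairs for $\alpha>\frac12$ is equivalent to the paper's union over committees and members. Your caveat that the statement must be read for $m>k$ is a legitimate point the paper leaves implicit (for $m=k$ the full candidate set vacuously satisfies both notions), and the $(k+1)$-set identity you mention is not needed.
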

\begin{proof}
First, we prove the $\alpha$-winning results. Consider any fixed committee $S$ of $k$ candidates and any candidate $c \notin S$. The probability that a voter prefers at least one candidate in $S$ to the outsider $c$ is $\frac{k}{k + 1}$. Therefore, when $\alpha < \frac{k}{k + 1}$, the probability that the committee $S$ is $\alpha$-winning approaches $1$ by applying a simple concentration bound and then the union bound on all outsiders. When $\alpha > \frac{k}{k + 1}$, the probability that there is an $\alpha$-winning committee approaches $0$ by applying a concentration bound and then the union bound on all committees. 

Next, we prove the $\alpha$-dominating results. Consider any fixed committee $S$ of $k$ candidates and any candidate $c \notin S$. The probability that a voter prefers a fixed candidate in $S$ to the outsider $c$ is $\frac{1}{2}$. Therefore, when $\alpha < \frac{1}{2}$, the probability that the committee $S$ is $\alpha$-dominating approaches $1$ by applying a concentration bound and then the union bound on all outsiders. when $\alpha > \frac{1}{2}$, the probability that there is an $\alpha$-dominating committee approaches $0$ by applying a concentration bound and then the union bound on all committees and all committee members.
\end{proof}
 
\end{document}